\def\BibTeX{{\rm B\kern-.05em{\sc i\kern-.025em b}\kern-.08em
		T\kern-.1667em\lower.7ex\hbox{E}\kern-.125emX}}
\newtheorem{lemma}{Lemma}
\newtheorem{theorem}{Theorem}
\newtheorem{definition}{Definition}
\newcommand{\dep}{$\ \textbf{dep}\ $}
\newcommand{\K}{\mathcal{K}}
\begin{document}

	\title{Global Stabilization for Causally Consistent Partial Replication
		\thanks{This research is supported in part by National Science Foundation award 1849599, and Toyota InfoTechnology Center. Any opinions, findings, and conclusions or recommendations expressed here are those of the authors and do not necessarily reflect the views of the funding agencies or the U.S. government.}
	}
	
	\author{\IEEEauthorblockN{Zhuolun Xiang}
		\IEEEauthorblockA{\textit{Department of Computer Science} \\
			\textit{University of Illinois at Urbana-Champaign}\\
			xiangzl@illinois.edu}
		\and
		\IEEEauthorblockN{Nitin H. Vaidya}
		\IEEEauthorblockA{\textit{Department of Computer Science} \\
			\textit{Georgetown University}\\
			nitin.vaidya@georgetown.edu}
	}
	
	\maketitle
	
	\begin{abstract}
		Causally consistent distributed storage systems have received significant attention recently due to the potential for providing high throughput and causality guarantees. 
		{\em Global stabilization} is a technique established for achieving causal consistency in distributed multi-version key-value store systems, adopted by the previous work such as GentleRain \cite{Du2014GentleRainCA} and Cure \cite{akkoorath2016cure}. Intuitively, this approach serializes all updates by their physical time and computes the ``Global Stable Time'' which is a time point $t$ such that versions with timestamp $\leq t$ can be returned to the client without violating causality. However, all previous designs with global stabilization assume {\em full replication}, where each data center stores a full copy of data, and each client is restricted to access servers within one data center. In this paper, we propose a theoretical framework to support {\em general partial replication} with  causal consistency via global stabilization, where each server can store an arbitrary subset of the data, and each client is allowed to communicate with any subset of the servers and migrate among them without extra delays. 
		We propose an algorithm that implements causal consistency for distributed multi-version key-value stores with general partially replication.
		We prove the optimality of the Global Stable Time computation in our algorithm regarding the remote update visibility latency, i.e. how fast update from a remote server is visible to the client, under general partial replication.
		We also provide trade-offs to further optimize the remote update visibility by introducing extra delays during client's migration. 
		Simulation results on the performance of our algorithm compared to the previous work are also provided. 
	\end{abstract}
	
	\begin{IEEEkeywords}
		distributed shared memory, causal consistency, partial replication, optimal
	\end{IEEEkeywords}
	
	\section{Introduction}\label{sec:intro}
	The purpose of this paper is to propose {\em global stabilization} for implementing causal consistency in a {\em partially replicated distributed storage system}. Geo-replicated storage system plays a vital role in many distributed systems, providing fault-tolerance and low latency when accessing data. In general, there are two types of replication methods, {\em full replication} where the same set of data are replicated at each server or data center, and {\em partial replication} where each server can store a different subset of the data. As the amount of data stored grows rapidly, partial replication is receiving an increasing attention \cite{dahlin2006practi, Hlary2006AboutTE, Shen2015CausalCF, crain2015designing, xiang2017lower, bravo2017saturn}.
	
	To simplify the applications developed based on distributed storage, many systems provide consistency guarantees when clients access the data. Among various consistency models, causal consistency has received significant attention recently, for its emerging applications in social networks.  To ensure causal consistency, when a client can get a version of some key, it must be able to get versions of other keys that are causally preceding. 
	
	There have been numerous designs for causally consistent distributed storage systems, especially in the context of full replication. For instance, {Lazy Replication} \cite{Ladin1992ProvidingHA} and {SwiftCloud} \cite{Zawirski2014SwiftCloudFG} utilize vector timestamps as metadata for recording and checking causal dependencies. {COPS} \cite{Lloyd2011DontSF} and {Bolt-on CC} \cite{bailis2013bolt} keep dependent updates  explicitly to maintain the causality. 
	{GentleRain} \cite{Du2014GentleRainCA} proposed the global stabilization technique for achieving causal consistency, which trades off throughput with data freshness. 
	{Eunomia} \cite{gunawardhana2017unobtrusive} also uses global stabilization but only within each data center, and serializes updates between data centers in a total order that is consistent with causality. 
	{Occult} \cite{mehdi2017can} moves the dependency checking to the read operation issued by the client to prevent data centers from cascading. 
	
	In terms of partial replication, there is some recent progress as well. 
	{PRACTI} \cite{dahlin2006practi} implements a protocol that sends updates only to the servers that store the corresponding keys, but the metadata is still sent to all servers.
	In contrast, our algorithm only requires sending metadata to a necessary subset of servers.
	{Saturn} \cite{bravo2017saturn} implements tree-based metadata dissemination via a shared tree among the datacenters to provide both high throughput and data visibility. All updates between data centers are serialized and transmitted through the shared tree. 
	Our algorithm does not require to maintain such shared tree topology for propagating metadata. Instead, our algorithm allows updates and metadata from one server to be sent to another server directly, without the extra cost of maintaining a shared tree topology among the servers.

	Most relevant to this paper is the {\em global stabilization} techniques used in GentleRain \cite{Du2014GentleRainCA}. Distributed systems often require its components to exchange heartbeat messages periodically in order to achieve fault tolerance. In the design of GentleRain, each server is equipped with a loosely synchronized physical clock for acquiring the physical time.
	When sending heartbeats, the value of physical clock is piggybacked with the message.  
	Also, the timestamp for each update message is the physical time when the update is issued, and all updates are serialized in a total order by their timestamps.
	The communication between any two servers is via a FIFO channel, hence the timestamp received by one server from another server is always monotonically increasing. 
	Suppose the latest timestamp server $i$ receives from server $j$ is $t$, then any updates from $j$ to $i$ with timestamp $\leq t$ has already been received by server $i$.
	Due to the total ordering of all updates by their physical time, to achieve causal consistency, each server $i$ only need to calculate the time point $T$ such that the latest timestamp value received from any other server is no less than $T$. This indicates that server $i$ has received all updates with timestamp $\leq T$ from other servers, and hence there will be no causal dependency missing if server $i$ returns versions with timestamp $\leq T$.
	We call such time point $T$ as the {\em Global Stable Time} or $GST$. 
	
	However, there are several constraints on the design of GentleRain. 
	In particular, 
	(i) GentleRain applies to only full replication, where each datacenter stores a full copy of all the data (key-value pairs).
	Within a data center, the key space is partitioned among the servers in that data center, and such partition needs to be identical for every data center, (ii) each client can only access servers within one data center. Under these constraints, the global stabilization approach is simple and straightforward.
	
	In this paper, we develop a theoretical framework for {\em general partial replication} via global stabilization where (i) we allow {\em arbitrary data replication} across all the servers, and (ii) each client can communicate with an {\em arbitrary subset of servers} for accessing data, and migrate among the servers {\em without extra delays}.
	As we will see in Section \ref{sec:compute_gst}, the global stabilization technique, which is relatively simple in the case of full replication, becomes much more complicated under general partial replication, due to the arbitrary data sharing pattern and clients' mobility.
	Finding the right way to compute the optimal Global Stable Time for general partial replication is the main challenge of this paper.
	
	The contributions of this paper are the following:
	\begin{enumerate}
		\item We propose an algorithm that implements causal consistency for general partially replicated distributed storage system. The algorithm allows each server to store an arbitrary subset of the data, and each client can communicate with an arbitrary subset of the servers and migrate among them without extra delays.
		\item We prove the optimality of the $GST$ computation in our algorithm regarding remote update visibility latency, i.e., how fast update from the remote server is visible to the client, under general partial replication.
		\item We also provide trade-offs to further optimize the remote update visibility latency by introducing extra delays during client's migration.
		\item We provide simulation results on the performance of our algorithm comparing to the stabilization algorithm of GentleRain.
	\end{enumerate}

	\section{System Model}
	\label{sec:model}
	
	\begin{floatingfigure}[r]{0.5\linewidth}
		\includegraphics[width=0.5\linewidth]{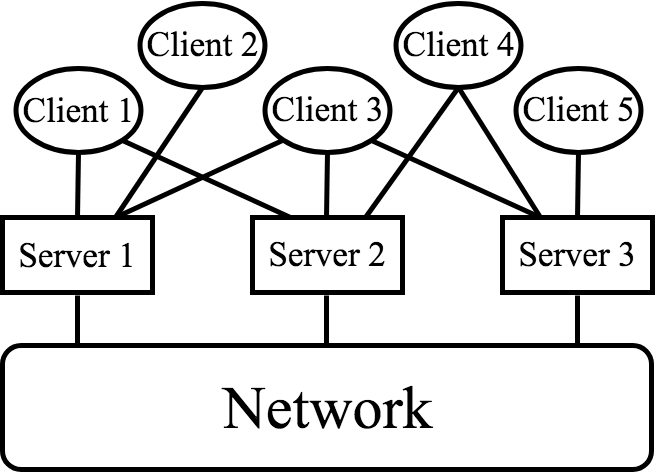}
		\caption{Illustration of the system model}
		\label{fig:model}
	\end{floatingfigure}
	
	We consider a client-server architecture,  as illustrated in Figure \ref{fig:model}. 
	Let there be $n$ servers, $\mathcal{S}=\{1,\cdots\, n \}$. Let there be $m$ clients, $\mathcal{C}=\{1,\cdots, m\}$. 
	Each client $c$ is restricted to communicate with an arbitrary set of servers $S_{c}$, and we will call $S_c$ the server set of client $c$.
	We assume that client $c$ can access all the keys stored at any server in $S_{c}$.
	Let $\mathcal{G}$ be the set containing all clients' server sets, i.e. $\mathcal{G}=\{S_{c}~|~\forall \text{ client }c\}$.
	Notice that the size of $\mathcal{G}$ is $|\mathcal{G}|\leq 2^n$ where $n$ is the total number of servers.
	We say a client migrates from server $i$ to server $j$, if the client issues some operation to server $i$ first, and then to server $j$.

	The communication channel between servers is assumed to be {\em point-to-point, reliable and FIFO}. 
	Each server has multi-version key-value storage locally, where a new version of a key is created when a client writes a new value to that key. Each version of a key also stores some metadata for the purpose of maintaining causal consistency. 
	Each server has a physical clock (reflects the physical time in the real world) that is loosely synchronized across all servers by some time synchronization protocol such as NTP \cite{ntp}. 
	Each server will periodically send heartbeat messages (denote as HB) with its physical clock value to a selected subset of servers (the choice of the subset is described later).
	The clock synchronization precision may only affect the performance of our algorithm, not the correctness.
	
	To access the data, a client can issue GET(key) and PUT(key, value) to a server. GET(key) will return to the client with the value of the key as well as some metadata. PUT(key, value) will create a new {\em version} of the key at the server, and return to the client with some metadata.
	We call all PUT operations to some server $i$ as {\em local PUT} at $i$, and all other PUT operations as {\em non-local PUT} with respect to $i$.
	
	\subsection{Model for General Partial Replication}
	We allow {\em arbitrary} replication of the keys among the servers, i.e. each server can store an arbitrary subset of the keys. 
	Let $\K_{i}$ denote the set of keys stored at server $i$.
	Let $\K_{ij}=\K_{i}\cap \K_{j}$ denote the set of keys shared by servers $i$ and $j$.
	For example, in Figure \ref{fig:asg}, let $\mathcal{K}_i=\{k',k,y,a\}$, $\mathcal{K}_{1}=\{k',x,b\}$, $\mathcal{K}_{j}=\{v, d\}$, then $\mathcal{K}_{i1}=\{k'\}$, $\mathcal{K}_{ij}=\emptyset$.
	
	In order to model the data partition, 
	we define a {\em share graph}, which was originally introduced by H{\'e}lary and Milani \cite{Hlary2006AboutTE}. We also define a {\em augmented share graph} that further captures how clients access servers. 
	
	\begin{definition}[Share Graph \cite{Hlary2006AboutTE}]\label{def:sg}
		Share graph is an unweighted undirected graph, defined as $G^s=(V^s,E^s)$, where $V^s=\{1,2,\cdots,n\}$, where
		vertex $i\in V^s$ represents server $i$, and there exists an undirected edge $(i,j)\in E^s$ if $\mathcal{K}_{ij}\neq \emptyset$.
	\end{definition}
	
	The augmented share graph extends the share graph by adding virtual edges between nodes $i,j$ such that $i,j\in S_c$ for some client $c$.
	
	\begin{definition}[Augmented Share Graph \cite{xiang2017lower}]\label{def:asg}
		Augmented share graph is an unweighted undirected multi-graph, defined as $G^a=(V^a,E^a)$. $V^a=\{1,2,\cdots,n\}$, where
		vertex $i\in V^a$ represents server $i$. There exists a {\em real} edge $(i,j)\in E^a$ if $\mathcal{K}_{ij}\neq \emptyset$, and there exists a {\em virtual} edge $(i,j)\in E^a$ if there exists some client $c$ such that $i,j \in S_c$. Denote the set of real edges in $G^a$ as $E_1(G^a)$ and the set of virtual edges in $G^a$ as $E_2(G^a)$.
	\end{definition}

	{\bf Example:} Figure \ref{fig:asg} shows an example of the augmented share graph defined above. In the example, $G^a$ consists $7$ vertices $h,i,j,1,2,3,4$, and the common keys shared by any two servers are labeled on each edge. There exists a client $c$ that can access $h,i,j$, thus vertices $h,i,j$ are connected by virtual edges. 
	\begin{floatingfigure}[r]{0.5\linewidth}
		\includegraphics[width=0.5\linewidth]{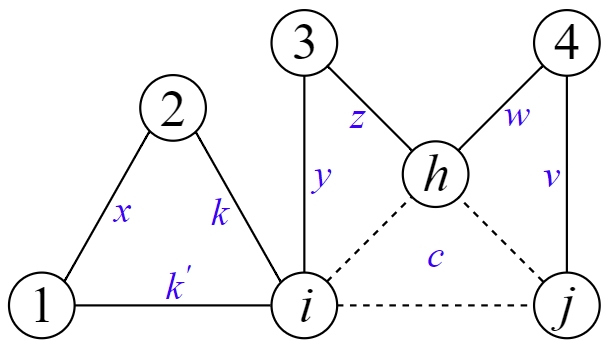}
		\caption{Illustration of $G^a$}
		\label{fig:asg}
	\end{floatingfigure}
	For convenience, we assume that both $G^s$ and $G^a$ are connected. However, our results can be easily extended to the case when the graph is partitioned. 
	We assume the augmented share graph is {\em static} for most of the paper, and briefly discuss how our algorithm may be adapted when there is data insertion/deletion or adding/removing servers in Section \ref{sec:dynamic}.
	
	\subsection{Causal Consistency}
	
	Now we provide the formal definition of causal consistency.
	Firstly, we define the happened-before relation for a pair of operations.
	
	\begin{definition} [Happened-before \cite{lamport1978time}]
		\label{def:hb}
		Let $e$ and $f$ be two operations ($PUT$ or $GET$).  $e$ happens before $f$, denoted as $e \rightarrow f$, if and only if at least one of the following rules is satisfied: 
		\begin{enumerate}
			\item $e$ and $f$ are two operations by the same client, and $e$ happens earlier than $f$
			\item $e$ is a PUT($k,v$) operation, $f$ is a GET($k$) operation and GET($k$) returns the value written by $e$
			\item there is another operation $g$ such that $e \rightarrow g$ and $g \rightarrow f$. 
		\end{enumerate}
	\end{definition}
	
	The above happens-before relation defines a standard causal relationship between two operations. Recall that each client's PUT operation will create a new {\em version} of the key.
	
	\begin{definition} [Causal Dependency \cite{roohitavaf2017causalspartan}]
		Let $K$ be a version of key $k$, and $K'$ be a version of key $k'$. We say $K$ causally depends on $K'$, and denote it as $K \dep K'$ if and only if PUT($k',K'$) $\rightarrow$ PUT($k,K$). 
		We use $\lnot (K\dep K')$ to denote that $K$ does not causally depend on $K'$.
	\end{definition}
	
	Now we define the meaning of visibility for a client.
	
	\begin{definition} [Visibility \cite{roohitavaf2017causalspartan}]
		A version $K$ of key $k$ is visible to a client $c$, if and only if $GET(k)$ issued by client $c$ to any server in $S_c$ returns a version $K'$ such that $K'=K$ or $\lnot (K\dep K')$. We say $K$ is visible to a client $c$ from a server $i$ if the version $K$ is returned from server $i$.
	\end{definition}
	
	We say a client $c$ can access a key $k$ if the client can issue PUT and GET operations to a server that stores $k$.
	Causal consistency is defined based on the visibility of versions to the clients as follows.
	
	\begin{definition} [Causal Consistency \cite{roohitavaf2017causalspartan}]\label{def:cc}
		The key-value storage is causally consistent if both of the following conditions are satisfied.
		\begin{itemize}
			\item Let $k$ and $k'$ be any two keys in the store. Let $K$ be a version of key $k$, and $K'$ be a version of key $k'$ such that $K \dep K'$. 
			For any client $c$ that can access both $k$ and $k'$, when $K$ is read by client $c$, $K'$ is visible to $c$.
			
			\item Version $K$ of a key $k$ is visible to a client $c$ after $c$ completes PUT($k,K$) operation.
		\end{itemize}
		
	\end{definition}

	In Section \ref{sec:algo}, we will first present the structure of the algorithm for both clients and servers.
	Then in Section \ref{sec:compute_gst}, we complete the algorithm by specifying the definition of the {\em Heartbeat Summary (HS)} and {\em Global Stable Time (GST)} used for maintaining causal consistency.
	We also prove in Section \ref{sec:opt} the optimality of our algorithm regarding remote update visibility latency, i.e., how fast update is visible to clients at remote servers, under general partial replication.
	By introducing extra delays during client's migration, we present algorithms in Section \ref{sec:optimize}  that can provide a trade-off between the visibility latency and client migration latencies.
	The evaluation of our algorithm is provided in Section \ref{sec:evaluation}.
	More discussions can be found in Section \ref{sec:discuss}.

	\section{Algorithm}
	\label{sec:algo}
	In this section, we propose the algorithms for the client (Algorithm \ref{alg:client}) and the server (Algorithm \ref{alg:server}).
	The algorithm structure is inspired by GentleRain \cite{Du2014GentleRainCA} and designed for general partial replication. 
	The main idea of our algorithm is to {\em serialize} all PUT operations and resulting versions by their {\em physical clock time} (which is a scalar). 
	For all causally dependent versions, our algorithm guarantees that the total order established by their timestamps is consistent with their causal relation, i.e., if $K \dep K'$ then $K$'s timestamp is strictly larger than $K'$'s timestamp.
	Such ordering simplifies causality checking since now each server can learn that up to which physical time point it has received updates from other servers when assuming FIFO channels between all servers. 
	When a server returns a version $K$ of key $k$ to a client, the server needs to guarantee that all causally dependent versions of $K$ are already visible to the client. How to decide the version of the key to returning is the main challenge of our algorithm, as represented by computing and using Global Stable Time ($GST$) in the algorithm below and Section \ref{sec:compute_gst}.
	While $GST$ is relatively easy to compute for full replication as in GentleRain, we will show that general partial replication makes the computation of optimal $GST$ much more complicated.
	
	When presenting our algorithm in this section, we left the Global Stable Time ($GST$) and Heartbeat Summary ($HS$) undefined, and the definitions are provided later in Section \ref{sec:compute_gst}. 
	Intuitively, $GST$  defines a time point, and the versions no later than this time point can be returned to the client while satisfying causal consistency. $HS$ is a component for computing $GST$.
	We prove the correctness of our algorithm in Section \ref{sec:proof}.
	We also prove in Section \ref{sec:opt} that our definition of $GST$ is optimal regarding the remote update visibility latency, i.e., how fast a version of a remote update is visible to the client. 
	In Table \ref{symbols} below, we provide a summary of the symbols used in our algorithm.
	Recall that $S_c$ is the set of servers that client $c$ can access, and $\mathcal{G}=\{S_{c}~|~\forall \text{ client }c\}$.
	
	\begin{table}[htp]
		\centering
		\begin{tabular}{|l|l|}
			\hline
			Symbols    & Explanations \\ \hline
			$ut$    & update time, scalar \\
			$K$    & version of some key $k$ with value $v$, tuple $< k,v,ut>$ \\  \hline
			$GT_c$    & metadata stored at client $c$ for get dependencies, scalar \\ 
			$PT_c$    & metadata stored at client $c$ for put dependencies, scalar \\ 
			$HS_c$    & Heartbeat Summary stored at client $c$, vector of size $|S_c|$ \\  \hline
			$HS_i(g)$    & Heartbeat Summary for server set $g\in \mathcal{G}$ at server $i$, scalar \\
			$GST$    & Global Stable Time, scalar \\ \hline
			$g$    & server set that $g\in \mathcal{G}$ \\ 
			$N_i^s$    & set of neighbors of server $i$ in the share graph excluding $i$ \\ 
			$HB_{ji}$    & heartbeat value from server $j$ to server $i$ \\ 
			$Clock_i$    & physical clock at server $i$ \\ 
			$O_i$    & set of servers that server $i$ needs to send heartbeat to\\ 
			\hline
			
		\end{tabular}
		\caption{Explanations of symbols}
		\label{symbols}
	\end{table}
	
	Algorithm \ref{alg:client} is the client's algorithm. Each client is restricted to issue GET and PUT operations to the servers in $S_c$. Each client will store a put dependency clock $PT_c$ (which is a scalar) for PUT operations, a get dependency clock $GT_c$ (scalar) for GET operations, and a vector $HS_c$ of length $|S_c|$ for remote dependencies. All these parameters will be specified in Section \ref{sec:compute_gst}. When issuing operations, the client will attach its clocks with the operation, as in lines $3,9$ in Algorithm \ref{alg:client}. When receiving the result from the server, the client will update its clocks as in lines $5,6,11$ in Algorithm \ref{alg:client}.
	
	\begin{algorithm} 
		{
			\caption{Client operations at client $c$.}
			\label{alg:client}
			\begin{algorithmic} [1]
				
				\STATE \textbf{GET}(key $k$) from server $i$
				\STATE \hspace{3mm}  compute $rd(c,i)=\min_{j\in S_c, j\neq i} HS_c[j]$
				\STATE \hspace{3mm}  send $\langle \textsc{GetReq} \ k,PT_c, rd(c,i), S_c\rangle$  to server $i$
				\STATE \hspace{3mm}  receive $\langle \textsc{GetReply} \ v, t,  \{hs_j~|~j\in S_c,j\neq i\}\rangle$ from server $i$
				\STATE \hspace{3mm}  $GT_c\leftarrow \max(GT_c, t)$
				\STATE \hspace{3mm}  $HS_c[j] \leftarrow max (HS_c[j], hs_j)$ for all $j\in S_c, j\neq i$
				\RETURN $v$
				
				\STATE \vspace{5mm} \textbf{PUT}(key $k$, value $v$) to server $i$
				\STATE \hspace{3mm} send $\langle \textsc{PutReq} \ k,v, \max(PT_c, GT_c) \rangle$ to server $i$
				\STATE \hspace{3mm} receive $\langle \textsc{PutReply} \ t \rangle$ 
				\STATE \hspace{3mm} $PT_c \leftarrow \max(PT_c, t)$
				
			\end{algorithmic}
		}
	\end{algorithm}
	
	\begin{algorithm}
		{
			\caption{Server operations at server $i$}
			\label{alg:server}
			\begin{algorithmic} [1]
				\STATE \textbf{upon} receive $\langle \textsc{GetReq} \ k, t, rd, g\rangle$ from client $c$
				\STATE \hspace{3mm} // The computation of $GST$ is provided in Section \ref{sec:compute_gst}
				\STATE \hspace{3mm} \textbf{if} $k$ shared by $j\in g\cap N_i^s$ \textbf{then}
				\STATE \hspace{6mm} \textbf{wait until} $GST\geq t$
				\STATE \hspace{3mm} obtain the latest version $K$ of key $k$ with largest timestamps from local storage s.t. 
				$K.ut \leq GST$ or $K$ is due to a local PUT operation at server $i$
				\STATE \hspace{3mm} send $\langle\textsc{GetReply} \  K.v, K.ut, \{HS_j(g)~|~j\in g, j\neq i\}\rangle$ to client $c$

				\STATE \vspace{5mm} \textbf{upon}  receive $\langle \textsc{PutReq} \ k, v, t \rangle$ from client $c$
				\STATE \hspace{3mm}  \textbf{wait until} $t<Clock_i$
				\STATE \hspace{3mm}  create new version $K$
				\STATE \hspace{3mm}  $K.k \leftarrow k$, $K.v \leftarrow v$, $K.ut \leftarrow Clock_i$
				\STATE \hspace{3mm}  insert $K$ to local storage
				\STATE \hspace{3mm}  \textbf{for} each server $j$ that stores key $k$ \textbf{do}
				\STATE \hspace{6mm}  send $\langle \textsc{Update} \ u_{K}=K \rangle$ to $j$
				\STATE \hspace{3mm}  send $\langle \textsc{PutReply} \ K.ut \rangle$ to client $c$

				\STATE \vspace{5mm} \textbf{upon}  receive $\langle \textsc{Update} \  u\rangle$ from $j$
				\STATE \hspace{3mm}  insert $u$ to local storage
				\STATE \hspace{3mm} $HB_{ji} \leftarrow u.ut$

				\STATE \vspace{5mm} \textbf{upon} every $\Delta$ time
				\STATE \hspace{3mm}  \textbf{for} each server $j\in O_i$ \textbf{do}
				\STATE \hspace{6mm}  send $\langle \textsc{Heartbeat } Clock_i \rangle$ to $j$

				\STATE \vspace{5mm} \textbf{upon}  receive $\langle \textsc{Heartbeat } hb \rangle$ from $j$
				\STATE \hspace{3mm}  $HB_{ji} \leftarrow hb$

				\STATE \vspace{5mm} \textbf{upon}  every $\theta$ time
				\STATE \hspace{3mm} compute $HS_i(g)$ for every $g\in \mathcal{G}$ such that  $i\in g$
				\STATE \hspace{3mm} \textbf{for} each server $j\in g$ \textbf{do}
				\STATE \hspace{6mm}  send $\langle \textsc{Heartbeat Summary } HS_i(g),  g\rangle$ to $j$
				
				\STATE \vspace{5mm} \textbf{upon}  receive $\langle \textsc{Heartbeat Summary } hs, g \rangle$ from $j$
				\STATE \hspace{3mm}  $HS_j(g) \leftarrow hs$
				
			\end{algorithmic}
		}
	\end{algorithm}
	
	Algorithm \ref{alg:server} below is inspired by the algorithm in \cite{Du2014GentleRainCA}, with several important differences: 
	(1) 
	The Global Stable Time computation is different and more complicated due to the general partial replication, as will be specified in Section \ref{sec:compute_gst}.
	(2) 
	The heartbeat/HS exchange procedures are different (lines $19-20$, $25-26$ in Algorithm \ref{alg:server}).
	(3)
	The client will keep slightly more metadata locally, such as a vector of length $|S_c|$.
	(4) 
	There may be blocking for the GET operation of the client as in lines $3,4$ of Algorithm \ref{alg:server}. Such blocking is necessary for satisfying the second condition of causal consistency as in Definition \ref{def:cc}, i.e., the version of client's own PUT is always visible to the client.
	
	The intuition of the algorithm is straightforward. When handling GET operations, the server will first check if the client may have issued a PUT at other servers on some key that it also stores, and make sure such version is visible to the client (lines $3,4$). Then the server will return the latest version of the key that satisfies causal consistency (line $5$). The computation of Global Stable Time (GST) is designed for this purpose, as will be specified in Section \ref{sec:compute_gst}.
	When handling PUT operations, the server will first wait until its physical clock exceeds the client's causal dependencies (line $8$). Then the server performs a put locally (lines $9,10,11$), sends the update to other servers that stores the same key (lines $12,13$), and replies to the client (line $14$). 
	
	Lines $15-17$ is for receiving updates from other servers. Rest of the algorithm (lines $18-28$) specifies how heartbeats and HSs are exchanged among the servers.

	\section{Computing Global Stable Time}
	\label{sec:compute_gst}
	
	In this section, we complete the algorithm by defining heartbeat exchange procedure and Global Stable Time computation. We will specify for each server the set of destination servers its heartbeat/HS messages need to be sent to and how to compute $GST$ from received messages. The Global Stable Time is a function of the augmented share graph defined in Section \ref{sec:model}. 
	As we will see in this section and Section \ref{sec:opt}, the computation of the optimal $GST$ is much more complicated than GentleRain due to general partial replication.

	\subsection{Server Side: $GST$ Computation and Heartbeat Exchange}
	
	Let $HB_{xy}$ denote the clock value attached with the heartbeat message sent from server $x$ to $y$. We will later use the term heartbeat value, heartbeat message or heartbeat to refer $HB_{xy}$.
	Basically, the Global Stable Time ($GST$) in our Algorithm \ref{alg:server} computes a time point that is ``safe'' for returning versions whose timestamps are no larger than this time point. More specifically, $GST$ is computed as {\em the minimum of a set of heartbeat values}, which is the time point that all the causal dependencies have been received at corresponding servers.
	In this section, we provide the computation of $GST$.
	
	We say a cycle or path is simple if it has no vertex repetition.
	We define the length of a cycle to be the number of nodes in the cycle. Nodes $a,b$ with both a real edge and a virtual edge between $a,b$ is considered a valid simple cycle of length $2$. We will use $(a,b)$ to denote the {\bf directed edge} from node $a$ to $b$. We will next define two sets $L_i(k)$ and $R_i(g)$ each contains a set of directed edges.
	
	Define set $L_i(k)$ with respect to server $i$ and a key $k\in \mathcal{K}_i$ as follows. For every simple cycle $(i,v_1,\cdots,v_m, i)$ of length $\geq2$ in $G^a$ such that $m\geq 1$, $k\in \mathcal{K}_{v_1i}$, we have $(v_1, i)\in L_i(k)$, and  if $(v_m,i)$ is a real edge, we also have $(v_m,i)\in L_i(k)$. 
	For instance, in Figure \ref{fig:intuition}, $L_i(k)=\{(1,i),(2,i)\}$.
	Intuitively, if $(v,i)\in L_i(k)$,  then server $v$ may send updates to $i$ that are causal dependencies of key $k$'s version. 
	For example, there can be updates $u_{K'}\rightarrow u_X\rightarrow u_K$, as shown in Figure  \ref{fig:intuition}.

	\begin{floatingfigure}[r]{0.6\linewidth}
		\includegraphics[width=0.6\linewidth]{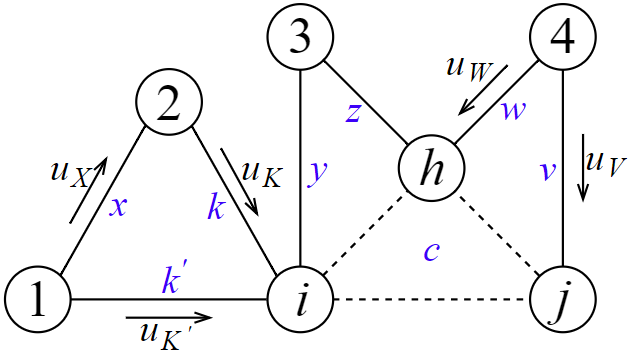}
		\caption{Intuition for set $L_i(k),R_i(g)$}
		\label{fig:intuition}
	\end{floatingfigure}
	
	Recall that  $\mathcal{G}$ is the set of all clients' server sets, i.e. $\mathcal{G}=\{S_c~|~\forall \text{ client }c\}$, and $|\mathcal{G}|\leq 2^n$ where $n$ is the total number of servers. 
	
	Define set $R_i(g)$ with respect to server $i\in g$ and $g\in \mathcal{G}$ as follows. For every simple path $(v_1,\cdots,v_m)$ in $G^a$ such that $v_1,v_m\in g$, $m\geq 2$,  we have $(v_2, v_1)\in R_i(g)$ if $v_1\neq i$ and $(v_2, v_1)$ is a real edge.
	For instance, in Figure \ref{fig:intuition}, let $g=S_c=\{h,i,j\}$, then $R_i(g)=\{(3,h),(4,h),(4,j)\}$.
	Intuitively, if $(a,b)\in R_i(g)$, then server $a$ may send updates to $b$  that are causal dependencies of key $k$'s version. For example in Figure \ref{fig:intuition}, there can be updates $u_V\rightarrow u_W$, and then some client $c'$ reads version $W$ from server $h$ and puts a new version $K$ of key $k$ to server $i$, leading to $K\dep V$.

	As mentioned, set $L_i(k)\cup R_i(g)$ contains directed edges along which the causal dependencies of key $k$'s version may be sent, and these dependencies can be read by client $c$ whose server set is $S_c=g$.
	The computation of $GST$ involves all heartbeat values in the set 
	\[
	\{HB_{xy}~|~ (x,y)\in L_i(k)\cup R_i(g)\}
	\]
	To be more specific, the following two values need to be computed for $GST$: \[LD_i(k)=\min_{(v,i)\in L_i(k)}(HB_{vi}), \,\,\,
	RD_i(g)=\min_{(x,y)\in  R_i(g)}(HB_{xy})
	\]
	which stands for local dependencies (LD) and remote dependencies (RD) respectively.
	The intuition for $LD_i(k)$ is to compute the time point up to which server $i$ has received all causally dependent updates of key $k$'s version. For example in Figure \ref{fig:intuition}, suppose $u_{K'}.ut=0$, $u_X.ut=\epsilon$ and $u_K.ut=2\epsilon$ where $\epsilon$ is some small number. Our algorithm guarantees that if updates $u\rightarrow v$, then $u.ut<v.ut$ as will be shown in the next section.  Recall that servers communicate via FIFO channels, once server $i$ has received $HB_{1i}\geq 2\epsilon$ and $HB_{2i}\geq 2\epsilon$, it has received all the causal dependencies of version $K$ from its neighbors in the augmented share graph. Therefore for version $K$ or similarly other versions of $k$ with timestamp $\leq LD_i(k)$, server $i$ has received the causal dependencies of those versions from its neighbors.
	The intuition for $R_i(g)$ is similar, which computes the time point when all servers in the server set $g$ have received all the causal dependencies of key $k$'s version. More details can be found in the correctness proof of our algorithm in the next section.
	
	\textsl{Heartbeat and HS exchange.}
	
	In order to compute $LD_i(k)$, server $i$ needs to know the set of heartbeat values $HB_{vi}$ for all pairs $(v,i)\in L_i(k)$. Therefore,
	\begin{itemize}
		\item For $\forall v$ such that $(v,i)\in L_i(k)$, $v$ will send heartbeat messages to $i$.
	\end{itemize}
	In order to compute $RD_i(g)$, server $i\in g$ needs to know the value of $\min_{(v,j)\in R_i(g)} HB_{vj}$ for each server $j$ such that $\exists(v,j)\in R_i(g)$. Therefore,
	\begin{itemize}
		\item For $\forall v,j$ such that $(v,j)\in R_i(g)$, $v$ will send heartbeat messages to $j$. Notice that $j\neq i$ by the definition of $R_i(g)$. 
		\item For each server $j$ above, $j$ will periodically send to $i$ a summary of heartbeats (denoted as Heartbeat Summary or HS) it received, as \[HS_j^i(g)=\min_{ (v,j)\in R_i(g)}(HB_{vj})\]
		Note that if $(v,j)\in R_i(g)$ then $j\in g$.
		Also notice that for $\forall i,i'\in g$ and $j\neq i,i'$, by definition $HS_j^i(g)=HS_j^{i'}(g)$, since the set $\{(v,j)\in R_i(g)\}=\{(v,j)\in R_{i'}(g)\}$. We will denote $HS_j(g)=HS_j^i(g)$ for brevity. 
	\end{itemize}
	Then $RD_i(g)=\min_{j\in g, j\neq i}(HS_j(g))$ by the definition of HS above.
	The target server set $O_i$ that server $i$ needs to send heartbeats to can be written as $O_i=\{j|(i,j)\in L_j(k), k\in \mathcal{K}_{j} \}\cup \{j|(i,j)\in R_z(g), z\in \mathcal{S}, g\in \mathcal{G} \}$.
	
	Finally, the computation of $GST$ used in our Algorithm also depends on the client's dependency clock $rd$.
	Intuitively, due to the delay of communication between servers, the values of $HS$s  may be different at different servers in $g$. For instance, server $i$ may receive $HS_j(g)=10$ from server $j$ at time $t$, but server $i'$ may only receive an old message $HS_j(g)=5$ at $t$ due to network delay.
	To avoid such inconsistency, the client $c$ accessing server set $g$ will keep the value of the largest $HS_j(g)$ it has seen so far for $\forall j\in g$, denoted as $HS_c[j]$. 
	And the client's dependency clock $rd(c,i)$ is defined as 
	\[
	rd(c,i)=\min_{j\in S_c, j\neq i} HS_c[j]
	\]
	Since client's dependency clock $rd(c,i)$ (or $rd$) reflects latest remote dependencies that have been observed by the client, when computing $GST$, the larger value between $RD_i(g)$ and $rd$ should be considered for remote dependencies.
	Therefore, the computation of $GST$ can be written as 
	\begin{equation*}
	\begin{aligned}
	GST=\min\left( LD_i(k), \max(RD_i(g), rd)\right)
	\end{aligned}
	\end{equation*}
	\subsection{Client Side}\label{sec:gst_client}
	
	Each client maintains a vector of size $|g|=|S_c|$ for $HS$ values as mentioned above.
	Also, the client will keep two scalars $GT_c$ and $PT_c$ as the dependency clock for GET and PUT dependencies respectively.

	\section{Correctness of Algorithm \ref{alg:client} and \ref{alg:server}}\label{sec:proof}
	In this section, we prove that our Algorithm \ref{alg:client} and \ref{alg:server} implement causal consistency by Definition \ref{def:cc}.

	\begin{lemma}\label{lem:1}
		Suppose that PUT($k',K'$) $\rightarrow$ PUT($k,K$), and thus $K\dep K'$. Let $u_{K'},u_{K}$ denote the corresponding updates of PUT($k',K'$) and PUT($k,K$), and let $u_{K'}.ut,u_{K}.ut$ denote their timestamps. Then $u_{K'}.ut < u_{K}.ut$, and ${K'}.ut<K.ut$.
	\end{lemma}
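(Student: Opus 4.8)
The plan is to reduce both inequalities to the single claim $K'.ut < K.ut$ and then prove that by induction on the length of the causal chain witnessing PUT($k',K'$) $\rightarrow$ PUT($k,K$). First I would observe that an update simply carries the timestamp of its version: by lines $9$--$13$ of Algorithm \ref{alg:server} the message $u_K$ is exactly the created version $K$, so $u_K.ut = K.ut$ and likewise $u_{K'}.ut = K'.ut$; hence $u_{K'}.ut < u_K.ut$ and $K'.ut < K.ut$ are the same statement, and it suffices to prove the latter. Next I would record three elementary facts read off from the algorithm. (A) When a PUT by client $c$ creates version $K_e$ at server $i$, the server waits until $t < Clock_i$ and sets $K_e.ut = Clock_i$, so $K_e.ut$ is strictly larger than the dependency $t=\max(PT_c, GT_c)$ the client attached at issue time; moreover after the reply the client sets $PT_c \geq K_e.ut$. (B) After a GET returning a version $K''$, the client sets $GT_c \geq K''.ut$ (line $5$ of Algorithm \ref{alg:client}). (D) The client scalars $PT_c$ and $GT_c$ are updated only by $\max$, and are therefore nondecreasing in time.

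For the induction I would use that $\rightarrow$ is, by Definition \ref{def:hb}, the transitive closure of its two base rules, so PUT($k',K'$) $\rightarrow$ PUT($k,K$) is witnessed by a finite sequence $e_0=\text{PUT}(k',K'), e_1, \ldots, e_\ell=\text{PUT}(k,K)$ in which each $e_p \rightarrow e_{p+1}$ holds by rule $1$ (same client, program order) or rule $2$ (a GET reading the value written by a PUT). I would then prove by induction on $\ell$ the combined invariant: for every operation $e$ with PUT($k',K'$) $\rightarrow e$, (i) after $e$ completes the issuing client has $\max(PT_c, GT_c) \geq K'.ut$, and (ii) if $e$ is itself a PUT creating $K_e$ then $K_e.ut > K'.ut$. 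The base cases ($\ell=1$) follow directly from (A), (B) and the two rules, and the lemma is exactly part (ii) applied to $e=\text{PUT}(k,K)$.

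The step I expect to be the main obstacle is the treatment of the intermediate GET operations, where the naive guess that a GET must return a version with timestamp $\geq K'.ut$ is false: a client may carry a large dependency clock yet legitimately read a stale version of an unrelated key. The resolution, and the reason for carrying a two-part invariant, is to track the client's clock value $\max(PT_c, GT_c)$ rather than the timestamp of whatever version a GET happens to return. Concretely, across a rule-$1$ edge monotonicity (D) preserves part (i), and fact (A) upgrades it to part (ii) whenever the target is a PUT; across a rule-$2$ edge $e_p \rightarrow e_{p+1}$ the source $e_p$ is a PUT, so the inductive part (ii) gives $K_{e_p}.ut > K'.ut$ for the version being read, and since $e_{p+1}$ returns exactly that version, fact (B) yields part (i) for $e_{p+1}$. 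Closing the induction this way avoids any reliance on the GET return rule and keeps the argument independent of the still-undefined $GST$ computation.
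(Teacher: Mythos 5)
Your proposal is correct and follows essentially the same route as the paper: both arguments rest on the observations that the update message carries exactly the version's timestamp, that the client's dependency clock $\max(PT_c,GT_c)$ is monotone and is raised to at least $K'.ut$ by PUT replies (line 11 of Algorithm \ref{alg:client}) and GET replies (line 5), and that the server's wait at line 8 of Algorithm \ref{alg:server} forces any subsequent PUT to receive a strictly larger timestamp. The only difference is presentational: where the paper handles the two base cases (same client; read-then-put) and then appeals to transitivity, you make that final step rigorous by an explicit induction along the happened-before chain with a two-part invariant, which is a more careful write-up of the same argument.
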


	\begin{proof}
		The proof is provided in Appendix \ref{sec:lemma1proof}.
	\end{proof}

	\begin{lemma}\label{lem:2}
		Suppose at some real time $t$, a version $K$ of key $k$ is read by client $c$ from server $i$. 
		Consider any server $i'\in S_c$ and version $K'$ of key $k'\in \K_{i'}$ such that $K'$ is due to a PUT at some server other than $i'$, and $K\dep K'$.
		Then at time $t$, 
		(i) $K'$ has been received by server $i'$,
		(ii) the version $K'$ is visible to client $c$ from server $i'$.
	\end{lemma}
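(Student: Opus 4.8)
The plan is to collapse both claims to a single inequality between heartbeat values and then invoke the FIFO property of the channels. Write $w$ for the server at which PUT($k',K'$) was executed; since $K'$ is non-local at $i'$ but $k'\in\K_{i'}$, the key $k'$ is shared by $w$ and $i'$, so $(w,i')$ is a \emph{real} edge of $G^a$ and the update $u_{K'}$ was sent from $w$ to $i'$ along a FIFO channel. By Lemma \ref{lem:1}, $K'.ut<K.ut$. Because $HB_{wi'}$ records the largest timestamp $i'$ has seen from $w$ and messages arrive in order, claim (i) follows the moment we establish that at time $t$ we have $HB_{wi'}\ge K'.ut$. Hence the entire proof reduces to bounding $HB_{wi'}$ from below by $K'.ut$.

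The heart of the matter is a \emph{coverage} claim: the real edge $(w,i')$ by which $K'$ reaches $i'$ is one of the edges over which the $GST$ that returned $K$ takes its minimum, that is $(w,i')\in L_i(k)$ in the case $i'=i$ and $(w,i')\in R_i(g)$ in the case $i'\neq i$ (where $g=S_c$). To prove this I would convert the causal chain witnessing PUT($k',K'$)~$\rightarrow$~PUT($k,K$) into a walk in $G^a$: each same-client step contributes a virtual edge and each write-then-read step on a shared key contributes a real edge, giving a walk from $w$ (where $K'$ is written) to the server $u$ where $K$ is written; appending the real edge $(u,i)$ (or $u=i$) and prepending the real edge $(i',w)$ produces a walk from $i'$ to $i$ whose endpoints both lie in $g$ and whose first edge is the real edge $(w,i')$. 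When $i'\neq i$ I would truncate this walk at the first vertex lying in $g$ and then reduce the truncated walk to a \emph{simple} path, checking that the reduction never deletes the initial edge $(w,i')$ (the start $i'$ is the unique $g$-vertex among the truncated vertices, so it is never revisited as an interior vertex); by the definition of $R(g)$ the terminal real edge of a simple path between two $g$-vertices lies in $R(g)$, and since $i'\neq i$ it lies in $R_i(g)$. When $i'=i$ the same walk closes into a simple cycle through $i$ with first vertex $w$ and $k'\in\K_{wi}$, so $(w,i)\in L_i(k')$, and Observation 1 identifies $L_i(k')$ with $L_i(k)$ (Observations 2--3 reconcile the boundary where an edge plays both a local and a remote role). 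I expect this combinatorial extraction --- producing a simple path or cycle that \emph{retains the specific terminal edge} $(w,i')$ --- to be the main obstacle, since a careless walk-to-path reduction can destroy exactly the edge we need.

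Granting coverage, the bound is routine. Since the server returned $K$ with $K.ut\le GST=\min\bigl(LD_i(k),\max(RD_i(g),rd)\bigr)$: if $i'=i$ then $HB_{wi}\ge LD_i(k)\ge GST\ge K.ut>K'.ut$; if $i'\neq i$ then $(w,i')\in R_i(g)$ gives $HB_{wi'}\ge RD_i(g)$, so when $K.ut\le RD_i(g)$ we are done, and when instead $K.ut\le rd$ we use $rd\le LST_c(i')\le HB_{wi'}$ (the client's stored $LST_c(i')$ is a past value of $LST_{i'}(g)=\min_{(z,i')\in R(g)}HB_{zi'}\le HB_{wi'}$, and $HB_{wi'}$ is monotone). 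In every case $HB_{wi'}\ge K.ut>K'.ut$, which by FIFO yields (i). The one case requiring separate care is when $K$ is returned not because $K.ut\le GST$ but because it is a local PUT at $i$ with $K.ut>GST$; there I would instead argue from the dependency clock the issuing client supplied to its PUT (which is $\ge K'.ut$ by Lemma \ref{lem:1} together with lines $5,9,11$ of Algorithm \ref{alg:client}), combined with the blocking in lines $3$--$4$ and the coverage claim, to recover the same heartbeat bound.

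For part (ii), claim (i) places $K'$ permanently at $i'$. When $c$ later issues GET($k'$) to $i'$, server $i'$ returns the latest version $K''$ with $K''.ut\le GST''$ (or a local PUT), where $GST''=Func(i',k',g,rd'')$, so it suffices to show $GST''\ge K'.ut$: then $K''.ut\ge K'.ut$, and by Lemma \ref{lem:1} any version with $K'\dep K''$ would have $K''.ut<K'.ut$, hence $K''=K'$ or $\lnot(K'\dep K'')$, which is exactly visibility of $K'$ to $c$ from $i'$. To obtain $GST''\ge K'.ut$ I would re-run the coverage argument with $K'$ playing the role of the read version: every causal dependency of $K'$ is also a dependency of $K$, and reading $K$ has raised $c$'s metadata ($GT_c\ge K.ut$ and each entry of $LST_c$), so by the same minimum-and-monotonicity bookkeeping the edges entering $L_{i'}(k')\cup R_{i'}(g)$ all carry heartbeat values $\ge K'.ut$, giving the required bound.
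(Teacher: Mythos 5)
Your proposal hinges on the ``coverage'' claim that the real edge $(w,i')$ delivering $u_{K'}$ always lies in $L_i(k)\cup R_i(g)$, the index set over which the \emph{final} GET's $GST$ takes its minimum; this claim is false, and it fails exactly in the cases that make the lemma hard. Take $i'=i$ in the following instance: servers $i,v,w$ with $k_2\in\K_{vw}$, $k'\in\K_{wi}$, and $k$ stored \emph{only} at $i$; a client puts $K'$ at $w$; another client reads $K'$ at $w$ and puts $K_2$ (key $k_2$) at $w$, which is replicated to $v$; a client $c'$ with $S_{c'}=\{v,i\}$ reads $K_2$ at $v$ and then puts $K$ (key $k$) at $i$; finally $c$ reads $K$ at $i$. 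Then $K\dep K_2\dep K'$, but $L_i(k)=\emptyset$, since no simple cycle through $i$ begins with a real edge sharing $k$ (the dependency entered $i$ over the \emph{virtual} edge $(v,i)$), and $R_i(g)$ contains no edge pointing into $i$ at all, by its very definition. So the final $GST$ constrains no heartbeat into $i$, your chain $HB_{wi}\ge LD_i(k)\ge GST\ge K.ut$ collapses ($LD_i(k)$ is a minimum over the empty set), and your fallback via Observation 1 is circular: its hypothesis requires the relevant edges to already lie in $L_i(k)$, and here $L_i(k)=\emptyset\neq\{(w,i)\}=L_i(k')$, so $L_i(k)$ and $L_i(k')$ are genuinely different. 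The lemma nevertheless holds in this instance, but for a reason your argument cannot see: the $GST$ check of the \emph{intermediate} GET---$c'$ reading the replicated version $K_2$ at $v$---involves $R_v(S_{c'})$, which contains $(w,i)$, so $K_2.ut\le\max(RD_v(S_{c'}),rd')\le$ a past value of $HB_{wi}$, and FIFO then forces $u_{K'}$ to have reached $i$ \emph{before $K$ even exists}. In other words, the heartbeat bound is enforced at a different server, for a different client, through an index set that never appears in the final GET's $GST$. This is precisely the content of the paper's Cases I.1(b), I.2 and II.1, whose chain-following structure (walk down the dependency chain to the last version issued at $i$, resp.\ $i'$, and apply the inequalities of the GET that created it, plus monotonicity of $HB$, $LST_j(g)$ and the client-carried $LST_c$) your single-shot reduction was designed to avoid.

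The same instance defeats your treatment of the local-PUT case: the blocking at line $8$ of Algorithm~\ref{alg:server} only forces $Clock_i$ past the writing client's dependency clock, which yields $K.ut>K'.ut$ but says nothing about whether $u_{K'}$ has been \emph{received} anywhere, and the blocking at lines $3$--$4$ is gated on the reading client's $PT_c$, which is $0$ when $c$ has never issued a PUT; no heartbeat bound can be recovered from these two mechanisms alone. Your walk-truncation device for $i'\neq i$ breaks at the analogous point: if the causal chain passes through $i'$ itself, the truncation at the first $g$-vertex returns to $i'$ and produces a closed walk rather than a simple path between two distinct vertices of $g$, so membership of $(w,i')$ in $R_i(g)$ is not certified---this is the paper's Case II.1, which additionally needs the separate argument (via Observation 3) that $LST_c(i)\ge K'.ut$, because the $rd$ sent with the later GET at $i'$ is a minimum that includes the entry for server $i$. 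So while your reduction of claim (i) to a FIFO-plus-heartbeat bound and your derivation of (ii) from a $GST$ lower bound are sound in the paper's easiest case (Case I.1(a), where the chain reaches $i$ entirely through replication updates), the central coverage step is wrong in general, and repairing it forces you back into the case analysis and intermediate-GET bookkeeping that constitute the paper's actual proof.
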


	\begin{proof}
		The proof is provided in Appendix \ref{sec:lemma2proof}.
	\end{proof}

	\begin{theorem}\label{thm:1}
		The key-value storage is causally consistent.
	\end{theorem}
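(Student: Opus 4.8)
The plan is to prove Theorem~\ref{thm:1} by verifying the two conditions of Definition~\ref{def:cc} directly, leaning on the two lemmas already established. The first condition concerns causal dependencies: for keys $k,k'$ with versions $K \dep K'$, whenever a client $c$ that can access both keys reads $K$, the version $K'$ must be visible to $c$. The second condition concerns read-your-writes: after $c$ completes $\textrm{PUT}(k,K)$, the version $K$ is visible to $c$.

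\textbf{First condition.} Suppose client $c$ reads version $K$ of key $k$ from some server $i \in S_c$ at real time $t$, and let $K'$ be a version of $k'$ with $K \dep K'$, where $c$ can access $k'$. Since $c$ can access $k'$, there is some server $i' \in S_c$ that stores $k'$, i.e. $k' \in \K_{i'}$. The goal is to show $K'$ is visible to $c$, i.e. that a $\textrm{GET}(k')$ issued by $c$ to any server in $S_c$ returns a version $K''$ with $K''=K'$ or $K'' \dep K'$. I would split into two cases according to where $K'$ originated. If $K'$ is due to a PUT at a server other than $i'$, then Lemma~\ref{lem:2} applies directly: it gives that at time $t$, $K'$ has been received by $i'$ and is visible to $c$ from $i'$. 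If instead $K'$ is due to a local PUT at $i'$ itself, then I would argue that the stored version at $i'$ is at least as recent as $K'$ in the $\dep$ order; combined with Lemma~\ref{lem:1}, which guarantees timestamps respect causality ($K'.ut < K.ut$ whenever $K \dep K'$), this ensures the version returned dominates $K'$ causally. The monotone maintenance of the client's $LST_c$ vector and $GT_c$ across the servers in $S_c$ is what propagates this guarantee to \emph{any} server in $S_c$, not merely to $i'$.

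\textbf{Second condition.} For read-your-writes, suppose $c$ completes $\textrm{PUT}(k,K)$ and subsequently issues a $\textrm{GET}(k)$ to some server $i \in S_c$. After the PUT, line~11 of Algorithm~\ref{alg:client} sets $PT_c \geq K.ut$. When the GET request is processed, the blocking rule in lines~3--4 of Algorithm~\ref{alg:server} forces $\textit{wait until } PT_c \leq GST$ whenever $k$ is shared with a neighbor in $g \cap N_i^s$, and line~5 returns the latest version with timestamp $\leq GST$ (or a local version at $i$). Since $GST \geq PT_c \geq K.ut$ after the wait, and by Lemma~\ref{lem:1} any version superseding $K$ in the $\dep$ order has strictly larger timestamp, the returned version is either $K$ itself or one that causally dominates $K$. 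Hence $K$ is visible to $c$.

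I expect the main obstacle to be the second condition together with the local-PUT subcase of the first condition, rather than the clean application of Lemma~\ref{lem:2}. The delicate point is handling the case where the relevant server $i$ or $i'$ does \emph{not} share key $k$ (or $k'$) with a neighbor in $g \cap N_i^s$, so that the blocking on line~4 is \emph{not} triggered; in that situation visibility must instead follow from the structure of $GST$ and the fact that a version present locally is always returnable (the ``$K$ is due to a local PUT at server $i$'' clause of line~5). Reconciling these two return paths---blocking-based for shared keys, local-storage-based for the issuing server's own writes---and showing they jointly cover every way a version can be read is where the argument requires the most care; the rest reduces to bookkeeping on the scalar timestamps via Lemma~\ref{lem:1}.
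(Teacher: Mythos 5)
Your decomposition is the same as the paper's: check the two conditions of Definition \ref{def:cc}, dispatch the case where $K'$ was written at a server other than the one being read to Lemma \ref{lem:2}, handle the local-PUT case via the ``local PUT'' clause of line 5 of Algorithm \ref{alg:server}, and use the blocking in lines 3--4 together with Lemma \ref{lem:1} for read-your-writes. The first condition is handled exactly as in the paper. The second condition, however, has a genuine gap. After the wait you know $GST \geq PT_c \geq K.ut$, and you conclude that ``the returned version is either $K$ itself or one that causally dominates $K$.'' This does not follow: line 5 selects the latest version \emph{among those present in server $i$'s local storage} with timestamp $\leq GST$ (or written locally at $i$). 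If the client performed PUT($k,K$) at a different server $i^*$ and the update $u_K$ from $i^*$ to $i$ is still in flight, then $K$ is not in $i$'s storage, and the GET returns some older version $K_0$ of $k$ --- quite possibly one with $K \dep K_0$, e.g.\ the version the client read before writing $K$ --- so $K$ would not be visible. Knowing $GST \geq K.ut$ by itself rules none of this out; the timestamp bookkeeping of Lemma \ref{lem:1} only helps once you know $K$ is locally present.

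The missing step is exactly the one the paper makes explicit: $GST = \min\left(LD_i(k), \max(RD_i(g), rd)\right) \leq LD_i(k)$, so the wait forces $LD_i(k) \geq K.ut$. Since $i^*$ and $i$ share $k$ (a real edge) and the client accesses both (a virtual edge), they form a valid cycle of length 2, so $(i^*,i) \in L_i(k)$ and hence $HB_{i^* i} \geq LD_i(k) \geq K.ut$. Because the channel from $i^*$ to $i$ is FIFO and every message carrying a clock value $\geq K.ut$ was sent after $u_K$, this implies $u_K$ has already been delivered to $i$; only then does line 5 return a version with timestamp $\geq K.ut$, and Lemma \ref{lem:1} finishes the argument. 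This receipt argument --- heartbeat minima plus FIFO channels certify delivery of all relevant updates with timestamp $\leq GST$ --- is the crux of why global stabilization works, and omitting it leaves the proof assuming what it must show. As a side remark, the case you single out as delicate ($k$ shared with no $j \in g \cap N_i^s$, so no blocking) is actually the easy one: any earlier PUT of $k$ by the client must then have been issued at $i$ itself, since any other server of $S_c$ storing $k$ would be a neighbor of $i$ in the share graph; so the local-PUT clause of line 5 applies directly.
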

	
	\begin{proof}
		The proof is provided in Appendix \ref{sec:thmproof}.
		
	\end{proof}

	\section{Optimality of the Algorithm}\label{sec:opt}
	In this section, we prove that the $GST$ computed by our algorithm is optimal for general partial replication regarding remote update visibility latency,
	which is defined as {\em the period from when a remote update is received by the server to when the remote update is visible to the client.} 
	Recall that in general partial replication, clients are allowed to migrate among the servers freely without extra delays, and our $GST$ is optimal for this case. Later in Section \ref{sec:optimize}, we show that if extra delays can be introduced during the client's migration, the remote update visibility latency can be further reduced.
	To show the optimality for general partial replication, we show that at line $5$ of Algorithm \ref{alg:server}, returning any version with a timestamp larger than our $GST$ value may violate causal consistency, indicating our definition of $GST$ is optimal regarding remote update visibility latency.
	Formally, we have the following theorem. 
	\begin{theorem}\label{thm:2}
		Consider Algorithm \ref{alg:client} and \ref{alg:general} for general partial replication.
		If any version $K$ with $K.ut>GST$ is returned to client $c$ from server $i$ as a result of its $GET(k)$ operation, the causal consistency may be violated. More specifically, there may exists a version $K'$ of some key $k'$ such that $K\dep K'$ and client $c$ can access key $k'$, but version $K'$ is not visible to client $c$.
	\end{theorem}
	
	\begin{proof}
		The proof is provided in Appendix \ref{sec:optproof}.
	\end{proof}

	\section{Optimization for Better Visibility}\label{sec:optimize}
	Previously in Section \ref{sec:algo} and \ref{sec:compute_gst}, we allow each client to migrate among the servers in $S_c$ without extra delays.
	In reality, the frequency of such migration may be low, i.e. a client is likely to communicate with a single server for a long period before changing to another one.
	If such migration among different servers occurs infrequently, 
	it is reasonable to introduce extra delays during the migration, in exchange for better remote update visibility latency when clients issue GET operations.
	In fact, some system designs already observed such trade-off, such as Saturn \cite{bravo2017saturn}. However, Saturn's solution requires to maintain an extra shared tree topology among all the servers, and is quite different from our global stabilization approach.
	In Section \ref{sec:basic_migrate} below, we demonstrate how to design the algorithm to achieve better remote update visibility latency as the discussion above. Then in Section \ref{sec:general_migrate}, we generalize the above idea from a single server to a group of servers.
	
	\subsection{One Server as a Group}\label{sec:basic_migrate}
	We will use the same notation from Section \ref{sec:algo} and \ref{sec:compute_gst}.
	Recall that the Global Stable Time $GST$, computed for the client $c$ accessing server $i$ for the value of key $k$, is the minimum of a set of heartbeat clock values, reflecting all possible local dependencies and remote dependencies. Essentially, the reason for taking remote heartbeat values received by servers other than $i$ is to ensure that the client can migrate freely among the servers in its server set $S_c$. During the client's migration to another server, there is no extra delay since all causal dependencies are guaranteed to be visible to the client as proven in Lemma \ref{lem:2}. 
	One natural idea is that, if the client can wait for a certain period during its migration to ensure that the client's causal dependencies are visible from the target server, then the $GST$ computation does not need to include the remote heartbeat values necessarily.
	To be more specific, the Global Stable Time simply becomes
	\[
	GST = LD_i(k)=\min_{(v,i)\in L_i(k)}(HB_{vi})
	\]
	which only reflects the causal dependencies locally.
	
	When a client migrates to another server, it needs to execute operation \textbf{MIGRATE} as shown in Algorithm \ref{alg:migrate}. 
	Basically, the client will send its dependencies clock $\max(PT_c, GT_c)$ to the new target server for migration. 
	For the target server, it needs to ensure the local storage has already included all the versions in the client's causal dependencies before returning an acknowledgment. Specifically, the server needs to wait until $\min_{k\in \mathcal{K}_i}(LD_i(k))$ is no less than the client's dependency clock, as shown in line $13$ of Algorithm \ref{alg:migrate}.

	\begin{algorithm} 
		{
			\caption{One Server as a Group}
			\label{alg:migrate}
			\begin{algorithmic} [1]
				\STATE // Client operations at client $c$
				\STATE \textbf{MIGRATE} to server $i$
				\STATE \hspace{3mm}  send $\langle \textsc{Migrate} \ \max(PT_c, GT_c)\rangle$  to server $i$
				\STATE \hspace{3mm}  wait for $\langle \textsc{Reply} \rangle$
				\RETURN

				
				\STATE \vspace{5mm}// Server operations at server $i$
				\STATE \textbf{upon} receive $\langle \textsc{Migrate} \ t\rangle$ from client $c$
				\STATE \hspace{3mm} \textbf{wait until} $t\leq \min_{k\in \mathcal{K}_i}(LD_i(k))$
				\STATE \hspace{3mm} send $\langle\textsc{Reply}  \rangle$ to client
				

			\end{algorithmic}
		}
	\end{algorithm}
	
	Also, there is no exchange of Heartbeat Summary among the servers, since now the computation of $GST$ does not dependent on the remote heartbeat values.
	This implies significant savings in bandwidth usage as the number of servers increases.
	
	Another advantage of Algorithm \ref{alg:migrate} is to decrease the visibility latency.
	As mentioned, the $GST$  is now equal to $LD_i(k)$, which is very likely to be larger than the original GST, because the original $GST$ also takes the remote heartbeat values for computation.
	Therefore the version returned is likely to have larger timestamps and thus fresher compared to Algorithm \ref{alg:server}. 
	Although there are extra delays incurred during the client's migration procedure as in line $13$ of Algorithm \ref{alg:migrate}, the penalty caused by migration delays is small if the frequency of migration is low.
	
	\subsection{Multiple Servers as a Group}\label{sec:general_migrate}
	In the basic case, we consider a single server as a ``group'', and introduce extra delays when clients migrate from one group to another.
	In general, a client may frequently access some subset of servers for some time, and then migrate to another subset of servers for frequent accessing. 
	For instance, each subset may be a data center that consists of several servers, and each client usually accesses only one datacenter for PUT/GET operations.
	In this case, each ``group'' that the client will access contains a subset of servers.

	\begin{algorithm} 
		{
			\caption{Multiple Servers as a Group}
			\label{alg:general}
			\begin{algorithmic} [1]
				\STATE // Client operations at client $c$
				\STATE \textbf{MIGRATE} to another group $g'$
				\STATE \hspace{3mm}  send $\langle \textsc{Migrate} \ \max(PT_c, GT_c), g'\rangle$  to some server $i\in g'$
				\STATE \hspace{3mm}  receive $\langle \textsc{Reply } \{hs_j\} \rangle$ from server $i$
				\STATE \hspace{3mm}  $HS_c[j] \leftarrow max (HS_c[j], hs_j)$ for all $j\in g'$
				\RETURN

				\STATE \vspace{5mm} \textbf{GET}(key $k$) from server $i\in g$
				\STATE \hspace{3mm}  compute $rd=\min_{j\in g, j\neq i} HS_c[j]$
				\STATE \hspace{3mm}  send $\langle \textsc{GetReq} \ k,PT_c, rd, g\rangle$  to server $i$
				\STATE \hspace{3mm}  receive $\langle \textsc{GetReply} \ v, t,  \{hs_j\}\rangle$ from server $i$
				\STATE \hspace{3mm}  $GT_c\leftarrow \max(GT_c, t)$
				\STATE \hspace{3mm}  $HS_c[j] \leftarrow max (HS_c[j], hs_j)$ for all $j\in g$
				\RETURN $v$
				
				\STATE \vspace{5mm}// Server operations at server $i$
				\STATE \textbf{upon} receive $\langle \textsc{Migrate} \ t, g\rangle$ from client $c$
				\STATE \hspace{3mm} \textbf{wait until} $t\leq \min(\min_{k\in \mathcal{K}_i}(LD_i(k)), RD_i(g))$
				\STATE \hspace{3mm} send $\langle\textsc{Reply }\{HS_j(g)~|~j\in g\}  \rangle$ to client
				
				\STATE \vspace{5mm} \textbf{upon} receive $\langle \textsc{GetReq} \ k, t, rd, g\rangle$ from client $c$
				\STATE \hspace{3mm} // $GST=\min(LD_i(k), \max(RD_i(g), rd))$
				\STATE \hspace{3mm} \textbf{if} $k$ shared by $j\in g\cap N_i^s$ \textbf{then}
				\STATE \hspace{6mm} \textbf{wait until} $t\leq GST$
				\STATE \hspace{3mm} obtain latest version $K$ of key $k$ with largest timestamps from local storage s.t. 
				$K.ut \leq GST$ or $K$ is due to a local PUT operation at server $i$
				\STATE \hspace{3mm} send $\langle\textsc{GetReply} \  K.v, K.ut, \{HS_j(g)~|~j\in g\}\rangle$ to client
			\end{algorithmic}
		}
	\end{algorithm}
	
	Thus,
	we can design an algorithm where the client can migrate among the servers within a group without extra delays, and need to wait extra time when migrating across different groups, as presented in Algorithm \ref{alg:general}. We only show the different parts compared to the algorithm in Section \ref{sec:algo} here for brevity. 
	
	We will use the same notation from Section \ref{sec:algo} and \ref{sec:compute_gst}.
	The augmented share graph in this section contains virtual edges connecting all servers accessible by one client, including servers within the same group and across groups.
	Then, when a client is accessing group $g$, and issues GET operation to server $i$, the Global Stable Time is computed as 
	\[
	GST=\min(LD_i(k), \max( RD_i(g), rd))
	\]
	where $rd=\min_{j\in g, j\neq i} HS_c[j]$, $HS_c$ is the vector of Heartbeat Summarys stored at client $c$.
	Note for the case $g=\{i\}$, by definition $GST=LD_i(k)$ since $R_i(g)=\emptyset$.
	
	When the client migrates to another group $g'$, extra delay will be enforced. In particular, the server $i'$ in group $g'$ needs to wait until $\min(\min_{k\in \mathcal{K}_{i'}}(LD_{i'}(k)), RD_{i'}(g))\geq t$, where $t$ is the dependency clock of the client. The extra delay here ensures that all client's causal dependencies has been received by the servers in the group $g'$, and visible to the client. 
	
	Notice that the algorithm in Section \ref{sec:algo} and Algorithm \ref{alg:migrate} are both special cases of Algorithm \ref{alg:general}, where group $g$ equals $S_c$ and some single server $i$ respectively.

	\section{Simulation Results}\label{sec:evaluation}
	In this section, we evaluate the heartbeat message overhead and the remote update visibility latency (or visibility latency in short) of our algorithm comparing to the global stabilization algorithm of GentleRain (or GentleRain in short) \cite{Du2014GentleRainCA}. 
	Some simulation results are deferred to the Appendix \ref{sec:experiment} due to lack of space.
	The remote update visibility latency is defined as {\em the period from when a remote update is received by the server to when this remote update is visible to the client}.
	
	Recall in Section \ref{sec:opt}, we have proved that our $GST$ computation is optimal in terms of remote update visibility latency for general partial replication. To give some insights on how well our algorithm performs, we provide simulation results on remote update visibility latency under various settings.
	
	\subsection{Simulation Setup}
	For evaluation purpose, we implement and evaluate the global stabilization layer as described in our algorithm from Section \ref{sec:algo}. We simulate servers by running multiple server processes within a single machine, and control network latencies by manually adding extra delays to all network packages.  
	Each server process will execute multiple threads concurrently, including i) one thread that periodically sends heartbeat messages to target server processes according to the heartbeat frequency ii) one thread that periodically sends update messages (due to $PUT$ operations) to target nodes according to the update throughput iii) one thread that listens and receives messages from other nodes and iv) one thread that periodically computes $GST$ and checks which remote updates are visible. 
	We use synthetic workloads for the simulation.
	The machine used in this experiment runs Ubuntu 16.04 with 8-core CPU of 3.4GHZ, 16 GB memory and 128GB SSD storage. The program is written in {\em Golang}, and uses standard TCP socket communication for exchanging messages.
	
	We evaluate our algorithm for a family of share graphs for the ease of comprehension. The graphs used are {\em ring graphs} of size $n$, with each node to be both a client and a server. The client of one node will only access the server of that node. This family of share graph can represent simple robotic networks in practice -- each node is a robot that stores key-value pairs depending on its physical location, and only share keys with its neighbors. In order to achieve causal consistency, by our algorithm, each node will send heartbeat messages to only its neighbors, and $GST$ is computed as the minimum of the heartbeat values received from its neighbors. As for the global stabilization algorithm in GentleRain, they cannot handle partial replication directly. Therefore we  pretend the system to be fully replicated so that GentleRain can achieve causal consistency correctly. Then, in GentleRain, the $GST$ for each node is computed as the minimum of heartbeat values from all nodes in the ring. 
	Hence intuitively, GentleRain will have a smaller $GST$ value comparing to our algorithm because its $GST$ is computed as the minimum of a larger set of heartbeat values. This implies that only older versions can be visible to the client comparing to our algorithm, which leads to higher remote update visibility latencies. Also, the heartbeat message overhead should be larger in GentleRain.

	In each experiment, we repeat the measurement $3$ times and take the average as a data point. Each experiment will vary one or two parameters while keeping other parameters constant. The default parameters for all experiments are listed below: stabilization frequency = $1000/\sec$,
	heartbeat frequency = $10/\sec$,  
	network delay = $0ms$ or $100ms$, 
	ring size = $10$, 
	update throughput = $5k/\sec$ and 
	clock skew = $0ms$.
	
	\subsection{Simulation Results and Observations}

	\paragraph*{\bf Message Overhead}
	We first measure the overhead of heartbeat messages in our algorithm and GentleRain, as a function of the ring size. Here the heartbeat frequency is set to be $50/\sec$. 
	The overhead presented below is computed as the {\em average overhead over all the servers}.
	As we can see from Figure \ref{fig:overhead}, the message cost is almost constant in our algorithm, while the cost increases dramatically in GentleRain. It is because our algorithm only requires each server to receive heartbeat messages from a small set of servers (neighbors in the ring) in order to achieve causal consistency, while GentleRain needs heartbeat messages from all other servers.
	
    \begin{floatingfigure}[r]{0.52\linewidth}
		\includegraphics[width=0.52\linewidth]{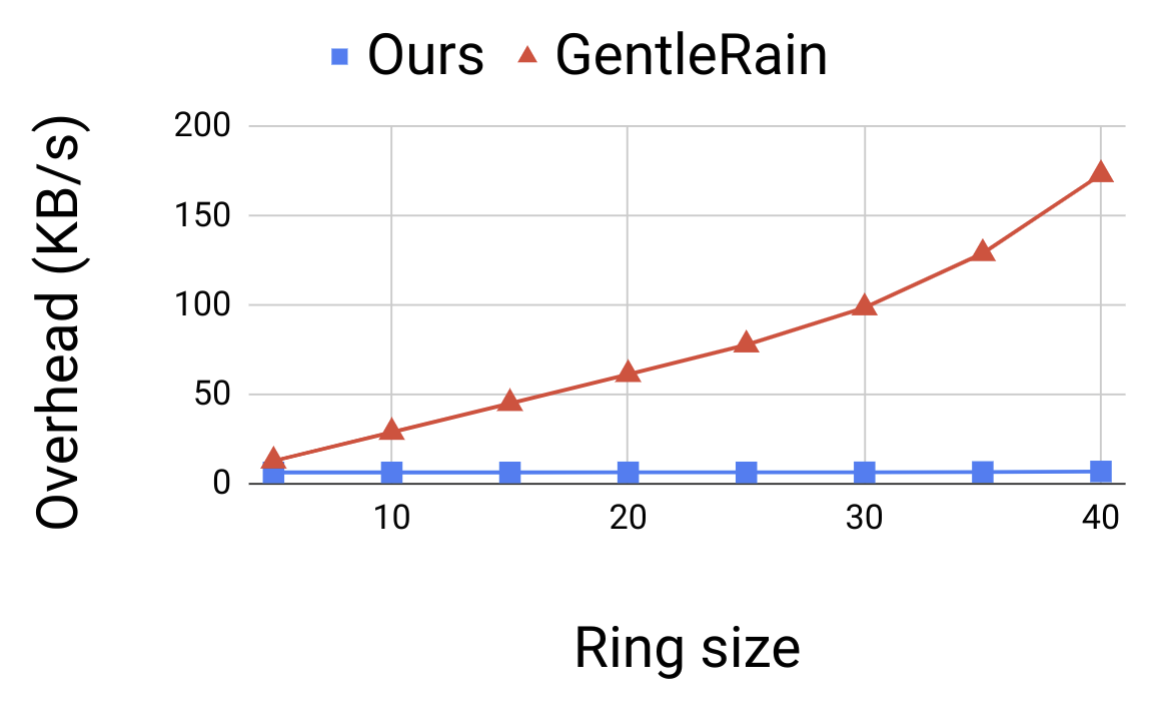}
		\caption{Different Network Delays}
		\label{fig:overhead}
	\end{floatingfigure}
	
	Next, we measure the visibility latency of our algorithm and GentleRain, under the influence of several parameters  including {\em  heartbeat frequency, stabilization frequency, clock skew, update throughput,  ring size and network delay (the last three are presented in Appendix \ref{sec:experiment} due to lack of space).}
	The visibility latency presented in this section is computed as the {\em average latencies over all the updates from all servers}.
	
	\paragraph*{\bf Stabilization Frequencies and Heartbeat Frequencies}
	In this section, we set both stabilization frequencies and heartbeat frequencies to be variables. The network delay is set to be $100ms$ in this experiment.

	\begin{table}[htp]
		\centering
		\begin{tabular}{|l||*{6}{c|}}
			\hline
			\multicolumn{2}{|l|}{\backslashbox{HB fq.\kern-2em}{\kern-1em Stab fq.}}
			&\makebox[3em]{1}&\makebox[3em]{10}&\makebox[3em]{100}
			&\makebox[3em]{500}&\makebox[3em]{1000}\\\hline\hline
			\multirow{4}{*}{Ours} 
			&1                             & 508.09 & 54.87 & 10.44 & 5.69 & 4.87 \\ \cline{2-7} 
			&10                            & 505.92 & 55.53 & 9.37  & 5.88 & 4.76 \\ \cline{2-7} 
			&50                            & 505.33 & 54.52 & 10.02 & 5.47 & 6.21 \\ \cline{2-7} 
			&100                           & 506.51 & 54.13 & 9.22  & 5.09 & 4.75 \\ \cline{2-7} 
			&200                           & 505.77 & 53.28 & 8.47  & 4.64 & 3.97 \\ \hline
			\multirow{4}{*}{GR} 
			&1                             & 1468.42 & 778.51  & 729.88  & 715.95  & 720.42  \\ \cline{2-7} 
			&10                            & 578.99  & 127.41  & 79.89   & 79.82   & 77.02   \\ \cline{2-7} 
			&50                            & 515.96  & 64.95   & 22.63   & 18.23   & 15.71   \\ \cline{2-7} 
			&100                           & 690.01  & 214.11  & 258.27  & 276.5   & 292.56  \\ \cline{2-7} 
			&200                           & 2973.86 & 3612.18 & 2736.22 & 2737.07 & 3685.53 \\ \hline
		\end{tabular}
		\caption{Different Stabilization/Heartbeat Frequency}
		\label{tab:freq}
		\vspace{-5mm}
	\end{table}
	From Table \ref{tab:freq} we can observe that there are significant improvements on latencies by our algorithm comparing to  GentleRain in the simulation. Here are some observations:
	\begin{itemize}
		\item For both algorithms, the visibility latency decreases significantly with higher stabilization frequencies, except the case when the heartbeat frequency is too high in GentleRain. In the latter case, the machine is already overwhelmed by heartbeat message, so increasing stabilization frequency actually damages the performance.
		\item The heartbeat frequency does not influence the visibility latency of our algorithm much, since update messages at a frequency about $5k/\sec$ also carries clock values, and $GST$ computation can proceed with such clock values. However, this is not the case for GentleRain, since each node needs to receive clocks from all other nodes, but the update messages each node receives only come from its neighbors. Then low heartbeat frequencies will delay the $GST$ computation and thus increase the visibility latencies of GentleRain.
		Therefore, the visibility latencies improve with higher heartbeat frequencies in GentleRain, until the number of heartbeat messages is too large for the simulation. Our algorithm does not suffer from such a problem since the heartbeat messages in our algorithm will only be sent to a small set of nodes.
	\end{itemize}

	\paragraph*{\bf Clock Skew}
	To evaluate the influence of clock skew on the visibility latency, we manually add clock skews between any pair of neighbors in the ring. Label the nodes in the ring with id $0,1,\cdots, n-1$ where $n$ is the ring size. For a skew value $t$, we add clock skew $(i\cdot t)/(n-1)$ to node $i$.
	We vary the skew value from $0$ms to $100$ms, and plot the visibility latency change in Figure \ref{fig:skew} below. 
	
	\begin{figure}[htp]
		\begin{subfigure}[b]{0.49\columnwidth}
			\includegraphics[width=\linewidth]{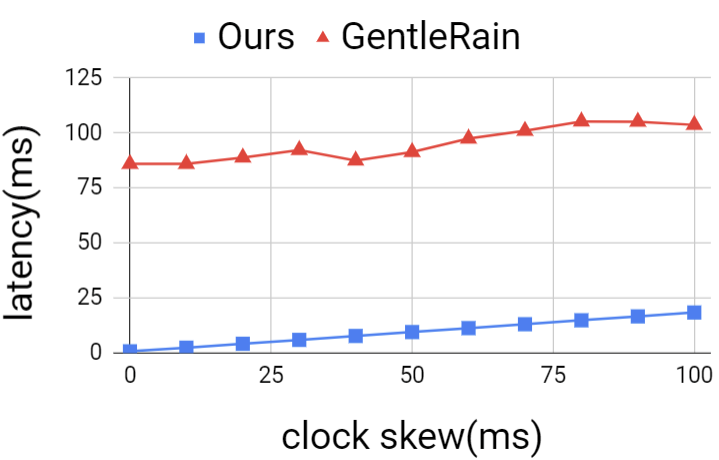}
			\caption{Network Delay = 0ms}
			\label{fig:skew_0}
		\end{subfigure}
		\hfill 
		\begin{subfigure}[b]{0.49\columnwidth}
			\includegraphics[width=\linewidth]{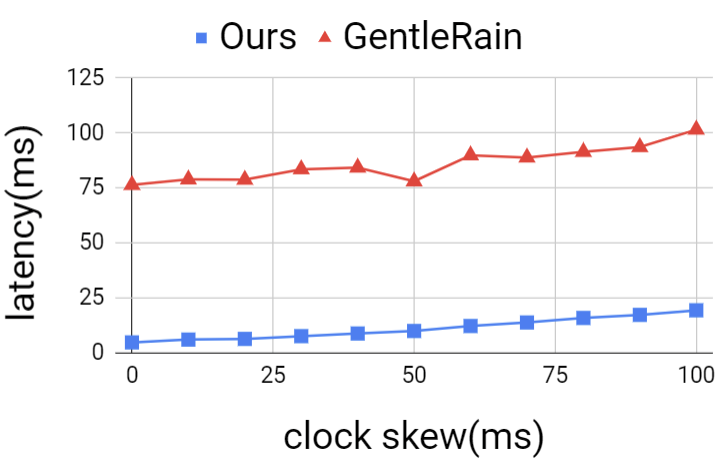}
			\caption{Network Delay = 100ms}
			\label{fig:skew_delay}
		\end{subfigure}
		\caption{Different Clock Skew}
		\label{fig:skew}
	\end{figure}
	
	As we can observe from Figures \ref{fig:skew_0} and \ref{fig:skew_delay}, the remote update visibility latencies increase with the clock skew in both cases. This is predictable since the latency is determined by the minimum clock value received by the server, which is affected by the clock skew between servers.
	Also, our algorithm performs significantly better than GentleRain regarding visibility latency under various clock skews in the simulation.
	
	More simulation results can be found in Appendix \ref{sec:experiment}.
	\section{Discussions and Extensions}\label{sec:discuss}
	
	\subsection{Fault Tolerance}
	In this section, we discuss how various failures such as server failure, network failure or network partitioning may affect our algorithm. 
	Our discussion is analogous to the one in GentleRain \cite{Du2014GentleRainCA}, and can be applied to other stabilization based algorithms as well. 
	
	The main observation is that our stabilization algorithm will guarantee causal consistency even if the system suffers from machine failure, machine slowdown, network delay or partitioning. Recall that in our algorithm, versions are totally ordered by their timestamps which equals the physical time point when the version is created. When a client issues a GET operation, the version returned will have timestamp value no more than the Global Stable Time. 
	
	When a server fails, the client may not receive any response from the server. However, since our algorithm allows clients to migrate across servers, the client can timeout after a period of waiting and then connect to another server to issue operations. The failure of the server will affect the computation of $GST$ at other servers, since the failed server no long sends heartbeat messages to other servers and thus the value of $GST$ at some server may stop updating. In this case, the causal consistency is ensured, since the version returning to the client may be out-of-date but still causally consistent. To make sure the system can make progress and have newer versions visible to the client eventually, other servers should be able to detect the failure eventually. For instance, servers can set a timeout for heartbeat and HS exchanges. If one server does not receive the message from another server after the timeout, it can mark this server as failed. How to recompute the new $GST$ to make progress after failure while ensuring causally consistency is an interesting open problem.
	
	For other issues such as machine slowdown, network delay or partitioning, similarly, the computation of $GST$ may stop making progress, but the version returned to the client is guaranteed to be causally consistent. Then when the failure is recovered, the pending heartbeats or updates can be applied at corresponding servers, and $GST$ can continue to increment.
	One possible failure that can cause the violation of causal consistency is {\em packet loss}, in particular, the loss of update messages. 
	Update loss may result in returning a version to the client that is not causally consistent due to missing dependencies. 
	In practice, we can use reliable communication protocols for transmitting update messages to handle the issue.

	\subsection{Using Hybrid Logical Clocks}\label{sec:hybrid_clock}
	
	To reduce the latency of the PUT operation caused by clock skew, we can use hybrid logical clocks (HLC) \cite{roohitavaf2017causalspartan} instead of a single scalar as the timestamps. The HLC for an event $e$ has two parts, a physical clock $l.e$ and a bounded logical clock $c.e$. The HLC is designed to have the property that if event $e$ happens before event $f$, then $(l.e<l.f)\vee ((l.e=l.f)\wedge(c.e<c.f))$ \cite{roohitavaf2017causalspartan}. By replacing the scalar timestamp with HLC, we may be able to avoid the blocking at line $8$ of Algorithm \ref{alg:server}. More details about HLC can be found in \cite{roohitavaf2017causalspartan}.

	\subsection{Dynamic Systems}\label{sec:dynamic}
	This section will briefly discuss the ideas on how the algorithm can be adapted for dynamic systems where keys can be inserted or deleted, and servers themselves can also be added or removed.
	The change in the system can be essentially modeled as augmented share graph change from $G$ to $G'$. 
	
	When the system experiences changes, the algorithm should guarantee that the causal consistency is not violated. That is, the versions returned to the client should always be causally consistent. Therefore, the algorithm should ensure that during the dynamic change, the Global Stable Time computed is nondecreasing. However, due to the change of the augmented share graph, it is possible that $GST$ computed in the new augmented share graph becomes smaller. To ensure causal consistency, the algorithm can continue to use the old $GST$ value $v$ at the time point when the augmented share graph changes, until the new $GST$ value exceeds $v$. Then the $GST$ used for GET operations is nondecreasing, and the version returned to the client is causally consistent.
	How to design an efficient algorithm for achieving causal consistency in dynamic systems is interesting and left for future work.

	\section{Other Related Work}
	Aside from the previous work mentioned in Section \ref{sec:intro},
	there has been other work dedicated to implementing causal consistency {\em without} any false dependencies in partially replicated distributed shared memory.
	H{\'e}lary and Milani \cite{Hlary2006AboutTE} identified the difficulty of implementing causal consistency in partially replicated distributed storage systems. 
	They proposed the notion of share graph and argued that the metadata size would be large if causal consistency is achieved without false dependencies.
	Reynal and Ahamad \cite{raynal1998exploiting} proposed an algorithm that uses metadata of size $O(mn)$ in the worst case, where $n$ is the number of servers and $m$ is the number of objects replicated.
	Shen et al. \cite{Shen2015CausalCF} proposed two algorithms, {\em Full-Track} and {\em Opt-Track}, that keep track of dependent updates explicitly to achieve causal consistency without false dependencies, where {\em Opt-Track} is proved to be optimal with respect to the size of metadata in local logs and on update messages. Their amortized message size complexity increases linearly with the number of operations, the number of nodes in the system, and the replication factor. 
	Xiang and Vaidya \cite{xiang2017lower} investigated how metadata is affected by data replication and client migration, by proposing an algorithm that utilizes vector timestamps and studying the lower bounds on the size of metadata. 
	The vector timestamp in their algorithm is a function of the share graph and client-server communication pattern, and have worst case timestamp size $O(n^2)$ where $n$ is the number of nodes in the system.
	In the above-mentioned algorithms, in order to eliminate false dependencies, the metadata sizes are large, in particular,  superlinear in the number of servers. 
	In comparison, the global stabilization technique used in our algorithm adopted for partial replication only requires metadata of {\em constant} size, independent of the number of servers, clients or keys.
	
	\section{Conclusion}
	This paper proposes global stabilization for implementing causal consistency in partially replicated distributed storage systems. The algorithm proposed allows each server to store an arbitrary subset of the data, and each client to communicate with an arbitrary set of the servers. We prove the correctness of the algorithm, show the optimality of our Global Stable Time computation under general partial replication, and also discuss several optimizations that can further improve the performance of the algorithm in practice. Simulartion results demonstrate the effectiveness of our $GST$ computation compared to GentleRain for causally consistent partial replication.

	\bibliographystyle{IEEEtran}
	\bibliography{IEEEabrv,IEEEexample}

\begin{thebibliography}{10}
\providecommand{\url}[1]{#1}
\csname url@samestyle\endcsname
\providecommand{\newblock}{\relax}
\providecommand{\bibinfo}[2]{#2}
\providecommand{\BIBentrySTDinterwordspacing}{\spaceskip=0pt\relax}
\providecommand{\BIBentryALTinterwordstretchfactor}{4}
\providecommand{\BIBentryALTinterwordspacing}{\spaceskip=\fontdimen2\font plus
\BIBentryALTinterwordstretchfactor\fontdimen3\font minus
  \fontdimen4\font\relax}
\providecommand{\BIBforeignlanguage}[2]{{%
\expandafter\ifx\csname l@#1\endcsname\relax
\typeout{** WARNING: IEEEtran.bst: No hyphenation pattern has been}%
\typeout{** loaded for the language `#1'. Using the pattern for}%
\typeout{** the default language instead.}%
\else
\language=\csname l@#1\endcsname
\fi
#2}}
\providecommand{\BIBdecl}{\relax}
\BIBdecl

\bibitem{Du2014GentleRainCA}
J.~Du, C.~Iorgulescu, A.~Roy, and W.~Zwaenepoel, ``Gentlerain: Cheap and
  scalable causal consistency with physical clocks,'' in \emph{SoCC}, 2014.

\bibitem{akkoorath2016cure}
D.~D. Akkoorath, A.~Z. Tomsic, M.~Bravo, Z.~Li, T.~Crain, A.~Bieniusa,
  N.~Pregui{\c{c}}a, and M.~Shapiro, ``Cure: Strong semantics meets high
  availability and low latency,'' in \emph{Distributed Computing Systems
  (ICDCS), 2016 IEEE 36th International Conference on}.\hskip 1em plus 0.5em
  minus 0.4em\relax IEEE, 2016, pp. 405--414.

\bibitem{dahlin2006practi}
M.~Dahlin, L.~Gao, A.~Nayate, P.~Yalagandula, J.~Zheng, and A.~Venkataramani,
  ``Practi replication,'' in \emph{IN PROC NSDI}.\hskip 1em plus 0.5em minus
  0.4em\relax Citeseer, 2006.

\bibitem{Hlary2006AboutTE}
J.~H{\'e}lary and A.~Milani, ``About the efficiency of partial replication to
  implement distributed shared memory,'' in \emph{ICPP}, 2006.

\bibitem{Shen2015CausalCF}
M.~Shen, A.~Kshemkalyani, and T.~Hsu, ``Causal consistency for geo-replicated
  cloud storage under partial replication,'' in \emph{IPDPS Workshops}, 2015.

\bibitem{crain2015designing}
T.~Crain and M.~Shapiro, ``Designing a causally consistent protocol for
  geo-distributed partial replication,'' in \emph{PaPoC}.\hskip 1em plus 0.5em
  minus 0.4em\relax ACM, 2015.

\bibitem{xiang2017lower}
Z.~Xiang and N.~Vaidya, ``Lower bounds and algorithm for partially replicated
  causally consistent shared memory,'' \emph{arXiv preprint arXiv:1703.05424},
  2017.

\bibitem{bravo2017saturn}
M.~Bravo, L.~Rodrigues, and P.~Van~Roy, ``Saturn: a distributed metadata
  service for causal consistency,'' in \emph{Proceedings of the Twelfth
  European Conference on Computer Systems}.\hskip 1em plus 0.5em minus
  0.4em\relax ACM, 2017, pp. 111--126.

\bibitem{Ladin1992ProvidingHA}
R.~Ladin, B.~Liskov, L.~Shrira, and S.~Ghemawat, ``Providing high availability
  using lazy replication,'' \emph{ACM Trans. Comput. Syst.}, vol.~10, pp.
  360--391, 1992.

\bibitem{Zawirski2014SwiftCloudFG}
M.~Zawirski \emph{et~al.}, ``Swiftcloud: Fault-tolerant geo-replication
  integrated all the way to the client machine,'' \emph{CoRR}, vol.
  abs/1310.3107, 2014.

\bibitem{Lloyd2011DontSF}
W.~Lloyd, M.~J. Freedman, M.~Kaminsky, and D.~G. Andersen, ``Don't settle for
  eventual: scalable causal consistency for wide-area storage with cops,'' in
  \emph{SOSP}, 2011.

\bibitem{bailis2013bolt}
P.~Bailis, A.~Ghodsi, J.~M. Hellerstein, and I.~Stoica, ``Bolt-on causal
  consistency,'' in \emph{Proceedings of the 2013 ACM SIGMOD International
  Conference on Management of Data}.\hskip 1em plus 0.5em minus 0.4em\relax
  ACM, 2013, pp. 761--772.

\bibitem{gunawardhana2017unobtrusive}
C.~Gunawardhana, M.~Bravo, and L.~Rodrigues, ``Unobtrusive deferred update
  stabilization for efficient geo-replication,'' \emph{arXiv preprint
  arXiv:1702.01786}, 2017.

\bibitem{mehdi2017can}
S.~A. Mehdi, C.~Littley, N.~Crooks, L.~Alvisi, N.~Bronson, and W.~Lloyd, ``I
  can't believe it's not causal! scalable causal consistency with no slowdown
  cascades,'' in \emph{Proceedings of the 14th USENIX Conference on Networked
  Systems Design and Implementation}.\hskip 1em plus 0.5em minus 0.4em\relax
  USENIX Association, 2017, pp. 453--468.

\bibitem{ntp}
D.~Mills, ``Network time protocol (version 3) specification, implementation and
  analysis,'' Tech. Rep., 1992.

\bibitem{lamport1978time}
L.~Lamport, ``Time, clocks, and the ordering of events in a distributed
  system,'' \emph{Communications of the ACM}, vol.~21, no.~7, pp. 558--565,
  1978.

\bibitem{roohitavaf2017causalspartan}
M.~Roohitavaf, M.~Demirbas, and S.~Kulkarni, ``Causalspartan: Causal
  consistency for distributed data stores using hybrid logical clocks,'' in
  \emph{Reliable Distributed Systems (SRDS), 2017 IEEE 36th Symposium
  on}.\hskip 1em plus 0.5em minus 0.4em\relax IEEE, 2017, pp. 184--193.

\bibitem{raynal1998exploiting}
M.~Raynal and M.~Ahamad, ``Exploiting write semantics in implementing partially
  replicated causal objects,'' in \emph{PDP}.\hskip 1em plus 0.5em minus
  0.4em\relax IEEE, 1998.

\end{thebibliography}

	\newpage

	\appendices

	\section{Proof for Lemma \ref{lem:1}}\label{sec:lemma1proof}
	
	\begin{proof}
		If two PUTs are issued by the same client, when PUT($k,K$) is issued, by lines $8,10$ of Algorithm \ref{alg:server}, $u_{K}.ut$ will be larger than the client's $\max(PT_c, GT_c)$ value, which is $\geq u_{K'}.ut$ by line $14$ of Algorithm \ref{alg:server} and lines $9,11$ of Algorithm \ref{alg:client}. Hence $u_{K'}.ut<u_{K}.ut$.
		
		If two PUTs are issued by different clients, and the happen-before relation is due to the second client reading the version of the first client's PUT($k',K'$), and then issuing PUT($k,K$). By line $6$ of Algorithm \ref{alg:server} and line $5$ of Algorithm \ref{alg:client}, when the second client issues PUT($k,K$), the dependency timestamp $\max(PT_c, GT_c)$ in line $9$ of Algorithm \ref{alg:client} will be $\geq u_{K'}.ut$. Similarly, by lines $8,10$ of Algorithm \ref{alg:server}, $u_{K}.ut$ will be larger than the client's $\max(PT_c, GT_c)$ value. Hence $u_{K'}.ut<u_K.ut$.
		
		For other cases when PUT($k',K'$) $\rightarrow$ PUT($k,K$), by transitivity we have $u_{K'}.ut<u_{K}.ut$.
		
		Since the timestamp of a version $K$ equals the timestamp for the corresponding replication update $u_K$, we also have
		${K'}.ut<K.ut$.
	\end{proof}
	
	\section{Proof for Lemma \ref{lem:2}}\label{sec:lemma2proof}
	First we list several observations regarding the definitions of the set $L_i(k), R_i(g)$ mentioned in Section \ref{sec:compute_gst}. The observations will be used in later proofs.
	
	\underline{Observation 1:} For any $(v,i),(v',i)\in L_i(k)$ and $k\in \mathcal{K}_{vi}, k'\in \mathcal{K}_{v'i}$, we have $L_i(k)=L_i(k')$.

	\underline{Observation 2:} For any $(v,i)\in L_i(k)$, if $(v,i)\in R_j(g)$ for some server $j\neq i$, we have $L_i(k)\subseteq R_j(g)$.

	\underline{Observation 3:} For a server set $g$ containing server $i,j$ and $(v,i)\in L_i(k)$, if $(v,i)\in R_j(g)$, we have $LD_i(k)=HS_i(g)$.
	
	\begin{proof}[Proof of the lemma]
		
		In order to have $K\dep K'$, there must be a chain of versions on a simple path (no vertex repetition) from $i'$ to $i$ in $G^a$ such that $K= K_1\dep K_2\dep \cdots\dep K_m\dep K_{m+1}=K'$ where  each version $K_x$ corresponds to key $k_x$.
		
		We prove the lemma in two cases, $i'=i$ and $i'\neq i$.

		\textbf{Case I: $i'=i$. }Since the version $K$ could be due to a local PUT at server $i$ or a non-local PUT at a server other than $i$, there are two cases.

		\begin{enumerate}
			\item $K$ is due to a non-local PUT at a server other than $i$.  There are two cases, namely none of $K_x$ is issued at $i$ for $1\leq x\leq m+1$, or at least one $K_x$ is issued at $i$.
			\begin{enumerate}
				\item None of $K_x$ is issued at $i$. This implies that there exists a simple cycle $C=(i,v_1,\cdots, v_m,i)$ such that $k\in \K_{iv_1}$, $k'\in \K_{iv_m}$, and $K$ is the result of PUT($k,K$) at $v_1$, $K'$ is the result of PUT($k',K'$) at $v_m$. Since $K\dep K'$, the dependency is propagated along the path $v_m, v_{m-1}, \cdots, v_1$ in $G^a$. We illustrate one possible execution as follows. 
				
				First, a client $c_{m+1}$ issues PUT($k',K'$) at server $v_m$, which leads to an update $u_{K'}$ from $v_m$ to $i$. 
				Then for $x=m, m-1, \cdots, 2$ sequentially, a client $c_x$ reads the version written by the previous client $c_{x+1}$ from server $v_x$ via a  GET operation at server $v_x$. If $(v_{x-1}, v_x)\in E_1(G^a)$, client $c_x$ then issues PUT($k_x, K_x$) at $v_x$ where $k_x\in \K_{v_{x-1}v_x}$, which leads to an update message from $v_x$ to $v_{x-1}$. If $(v_{x-1}, v_x)\in E_2(G^a)$, without loss of generality, suppose $c_x$ can access both $v_{x-1},v_x$. Then $c_x$ issues PUT($k_x, K_x$) at $v_{x-1}$ where $k_x\in \K_{v_{x-1}}$. In the end, client $c_1$ read the version $K_2$,  written by client $c_2$, from server $v_1$, and issues PUT($k,K$) at server $v_1$, which results in an update $u_K$ from $v_1$ to $i$.
				By the definition of happens-before relation, it is clear that PUT($k',K'$) $\rightarrow$ PUT($k,K$), namely $K\dep K'$.

				\begin{figure}[H]
					\centering
					\includegraphics[width=0.8\linewidth]{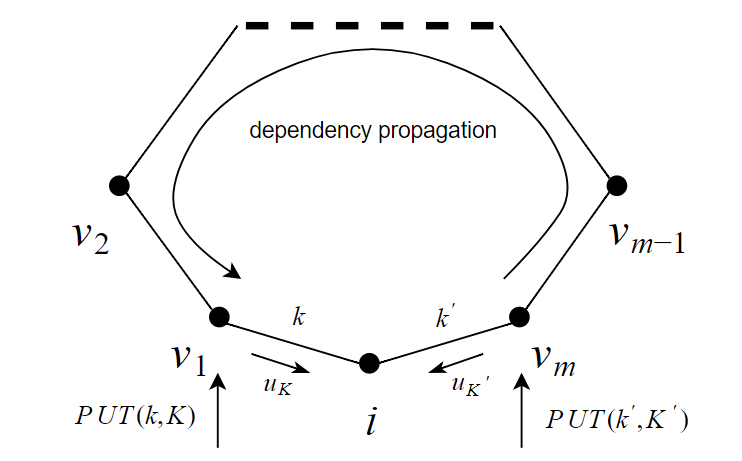}
					\caption{Illustration for Case I.1(a)}
					\label{fig:case11a}
				\end{figure}
				
				We first prove that $K'$ is received by server $i$. Let $HB^0_{v_mi}$ denote the heartbeat value received by $i$ from $v_m$ when $K$ is read by the client. Since $K$ is read by the client, by line $5$ of Algorithm \ref{alg:server} we have $K.ut\leq GST$. By definition $GST=\min(LD_i(k), \max(RD_i(g), rd))\leq LD_i(k)$, we have $K.ut\leq LD_i(k)$. By the definition of set $L_i(k)$, we have $(v_m,i)\in L_i(k)$, and thus $LD_i(k)=\min_{(v,i)\in L_i(k)}(HB_{vi})\leq HB^0_{v_m i}$, which implies that $K.ut\leq HB^0_{v_m i}$. By Lemma \ref{lem:1}, $K'.ut<K.ut$ since $K\dep K'$. Therefore we have $K'.ut\leq HB^0_{v_m i}$, which implies that $K'$ is received by server $i$ since the channel is assumed to be FIFO.
				
				Now we prove that $K'$ is visible to client $c$ from server $i$.
				Let $GST^0$ denote the Global Stable Time when $K$ is read by the client, then $GST^0\geq K.ut$ by line $5$ of Algorithm \ref{alg:server}.
				Since $(v_1,i),(v_m,i)\in L_i(k)$, by Observation $1$, $L_i(k)=L_i(k')$ and thus $LD_i(k)=LD_i(k')$. Notice that at any server, the heartbeat values received from another server is nondecreasing, thus the value of $LD_i(k')$ and $RD_i(g)$ at any server are also nondecreasing. By line $6$ and $2$ of Algorithm \ref{alg:client},  the value of $rd$ computed at line $2$ of Algorithm \ref{alg:client} is also nondecreasing. Therefore when client $c$ issues $GET(k')$ at server $i$, $GST=\min(LD_i(k'), \max(RD_i(g), rd))\geq GST^0 \geq K.ut$.  By Lemma \ref{lem:1}, $K'.ut<K.ut$, which implies that $GST\geq K'.ut$ and thus $K'$ is visible to client $c$ from server $i$.

				\item At least one $K_x$ is issued at $i$. Let $K_f$ be the first version that is issued at $i$, namely $K_f$ is the version issued at $i$ with the largest subscript. 
				Since $K_f\dep K_{f+1}\dep\cdots\dep K'$, 
				there exists a simple cycle $C=(i,v_{f+1}, v_{f+2}, \cdots, v_m,i)$, where $k'\in \K_{iv_m}$ and $K'$ is the result of PUT($k',K'$) at $v_m$.
				Depending on the edge $(i,v_{f+1})$ and how dependencies propagate, there are two cases.
				
				\begin{enumerate}
					\item 
					$(i,v_{f+1})$ is a real edge. Let $k_{f+1}\in \K_{iv_{f+1}}$ and $K_{f+1}$ is the result of PUT($k_{f+1},K_{f+1}$) at $v_{f+1}$.
					The dependency between $K'$ and $K_{f+1}$ is propagated along the path $(i,v_m, \cdots, v_{f+1})$ similarly as in Case I.1(a), and $K_f$ is issued by some client $c'$ after $c'$ read $K_{f+1}$ from server $i$. Then when $K_{f+1}$ is read by the client $c'$ at server $i$, the conclusion of Case I.1(a) guarantees that the lemma holds.
					\begin{figure}[H]
						\centering
						\includegraphics[width=0.7\linewidth]{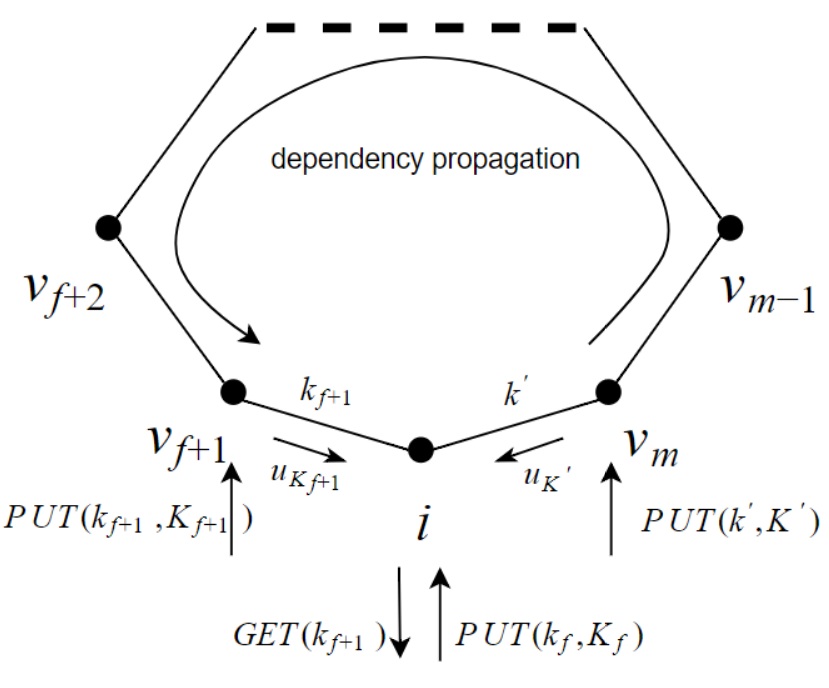}
						\caption{Illustration for Case I.1(b).i}
					\end{figure}
					
					\item 
					$(i,v_{f+1})$ is a virtual edge. Without loss of generality, suppose that $i,v_{f+1}\in S_{c'}$.
					The dependency between $K_{f}$ and $K'$ is propagated along the path similarly as in Case I.1(a), and $K_{f}$ is issued by client $c'$ at server $i$ after $c'$ reads $K_{f+2}$ from server $v_{f+1}$.

					We first prove that $K'$ is received by server $i$. Let $HB^0_{v_mi}$ denote the heartbeat value received by $i$ from $v_m$ when $K_{f+2}$ is read by the client from server $v_{f+1}$. Consider the time point when $K_{f+2}$ is read by the client from server $v_{f+1}$. By line $5$ of Algorithm \ref{alg:server} we have $K_{f+2}.ut\leq GST$.  By definition, $RD_{v_{f+1}}(g)=\min_{(x,y)\in R_{v_{f+1}}(g)}(HB_{xy})\leq HB^0_{v_mi}$ since $(v_m, i)\in R_{v_{f+1}}(g)$. Also, by line $2$ and $6$ of Algorithm \ref{alg:client}, $rd=\min_{j\in S_c, j\neq v_{f+1}}HS_c[j]\leq HS_c[i]\leq HB^0_{v_mi}$. 
					When $K_{f+2}$ is returned, by the definition of $GST$,  $GST=\min\left( LD_{v_{f+1}}(k_{f+2}), \max(RD_{v_{f+1}}(g), rd)\right)\leq \max(RD_{v_{f+1}}(g), rd)\leq HB^0_{v_mi}$. 
					Hence we have $K_{f+2}.ut\leq GST\leq HB^0_{v_mi}$. By Lemma \ref{lem:1}, $K'.ut<K_{f+2}.ut$ since $K_{f+2}\dep K'$.
					Therefore we have $K'.ut\leq HB^0_{v_m i}$, which implies that $K'$ is received by server $i$ since the channel is assumed to be FIFO.
					
					\begin{figure}[H]
						\centering
						\includegraphics[width=0.7\linewidth]{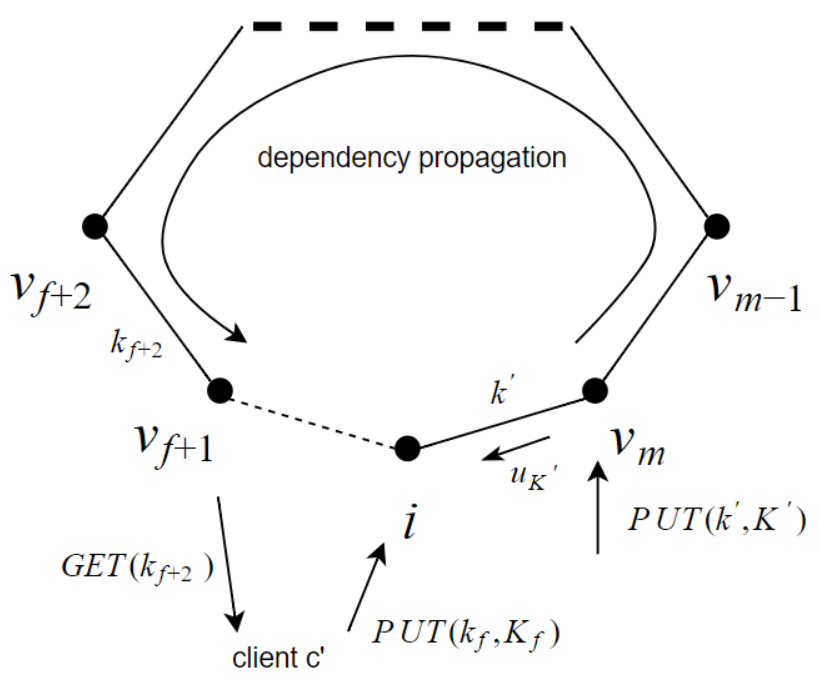}
						\caption{Illustration for Case I.1(b).ii}
					\end{figure}
					
					Now we prove  $K'$ is visible to client $c$ from server $i$. 
					
					We first show that $LD_i(k')\geq K'.ut$ when client $c$ issues $GET(k')$ to server $i$.
					Consider the time point when $K_{f+2}$ is read by the client $c'$ from server $v_{f+1}$. We have $K_{f+2}.ut\leq GST\leq \max(RD_{v_{f+1}}(g'), rd)$ where $g'=S_{c'}$. Notice that $\forall (v,i)\in L_i(k')$, we have $(v,i)\in R_{v_{f+1}}(g')$, since we can find a cycle containing $(v,i)$ that satisfies the requirement for $R_{v_{f+1}}(g')$. This implies that $LD_i(k')\geq RD_{v_{f+1}}(g')$ at any time point. For the value of $rd$, it is computed as $rd=\min_{j\in S_{c'}, j\neq v_{f+1}} HS_{c'}[j]\leq HS_{c'}[i]$. By definition, $HS_{c'}[i]\leq HS_i(g')\leq LD_i(k')$. The first inequality is because that $HS_{c'}[i]$ is updated by $HS_i(g')$, and the second inequality is because that $HS_i(g')$ includes the heartbeat value $HB_{vi}$ for all $(v,i)\in  L_i(k')$ and calculates the minimum.
					Therefore, we have $rd\leq LD_i(k')$, together with $RD_{v_{f+1}}(g')\leq LD_i(k')$ and $K_{f+2}.ut\leq \max(RD_{v_{f+1}}(g'), rd)$, we have $K_{f+2}.ut\leq LD_i(k')$ at the time point when $K_{f+2}$ is returned. By Lemma \ref{lem:1}, $K'.ut<K_{f+2}.ut$ and thus $K'.ut\leq LD_i(k')$. Since $LD_i(k')$ is nondecreasing, this condition remains true later when client $c$ reads $K'$ from $i$.
					
					Now we show that $\max(RD_i(g), rd)\geq K'.ut$  when client $c$ issues $GET(k')$ to server $i$. 
					When $K$ is read by the client $c$ from server $i$, by line $5$ of Algorithm \ref{alg:server} we have $K.ut\leq GST\leq \max(RD_i(g), rd)$. Since the value of $\max(RD_i(g), rd)$ is nondecreasing, when client $c$ issues $GET(k')$ later, we also have $K.ut\leq \max(RD_i(g), rd)$. By Lemma \ref{lem:1}, $K'.ut<K.ut$ and thus $K'.ut\leq \max(RD_i(g), rd)$.
					
					Summarizing the conclusions above, we have 
					$GST=\min(LD_{i}(k'), \max(RD(g),rd))\geq K'.ut$, which implies that $K'$ is visible to client $c$ from server $i$.

				\end{enumerate}
			\end{enumerate}

			\item $K$ is due to a local PUT at server $i$. 
			Since $K$ is issued at server $i$, Case I.1(b) proves that the lemma holds.
			
		\end{enumerate}

		\textbf{Case II: $i'\neq i$. }
		
		\begin{enumerate}
			\item First consider the case where there exists at least one $K_x$ issued at server $i'$. Let $K_f$ be the last version that is issued at server $i'$, namely $K_f$ is the version with the largest subscript. 
			Then the same proof for Case I.1(b) proves that $K'$ is received by server $i'$, and $LD_{i'}(k')\geq K'.ut$. 
			
			Now we will prove that $K'$ is visible to client $c$ from server $i'$. When $K$ is read by client $c$ from server $i$, by line $5$ of Algorithm \ref{alg:server}, we have $K.ut\leq GST=\min\left( LD_i(k), \max(RD_i(g), rd)\right)\leq \max(RD_i(g), rd)$  where $g=S_c$. By definition, $RD_i(g)=\min_{j\in g, j\neq i}(HS_j(g))$ and $rd=\min_{j\in g,j\neq i}(HS_c[j])$. Since the client will store the largest $HS$ values for each server $j\in S_c$, we have $HS_c[j]\geq K.ut>K'.ut$ stored at the client $c$ for each server $j\neq i$ in $S_c$. 
			
			Now we will show that $HS_c[i]\geq K'.ut$ when client $c$ issues $GET(k')$ to server $i'$.  Since $K=K_1 \dep K_2\dep \cdots \dep K_f$, there exists a simple path $(i,v_1,\cdots, v_m,i')$ connects $i'$ and $i$ that propagates the dependency above. Similarly to Case I.1.(b), there are two cases, i.e. $(i,v_1)$ is a real edge or virtual edge. 
			If $(i,v_1)$ is a real edge, let version $K_t$ of key $k_t$ be the version that is sent from $v_1$ to $i$, and read by some client at $i$. Since $K_t$ is visible, we have $LD_i(k_t)\geq GST\geq K_t.ut$. Notice that $(v_1,i)\in R_{i'}(g)$ due to the simple path above,  by Observation 3, we know that $LD_i(k_t)=HS_i(g)$. Thus $HS_i(g)\geq K_t.ut>K'.ut$.
			If $(i,v_1)$ is a virtual edge, let client $c'$ be the one that gets a version $K_t$ from server $v_1$ and then puts a version to server $i$. When $K_t$ is returned, we have $HS_i(S_{c'})\geq K_t.ut$. Notice that for $\forall (u,i)\in R_{i'}(g)$ where $g=S_c$, we also have $(u,i)\in R_{v_1}(S_{c'})$ since $v_1,i'$ are connected by a simple path. Thus $HS_i(g)\geq HS_i(S_{c'})\geq K_t.ut>K'.ut$.
			Since the client will keep largest $HS$ values, we have $HS_c[i]\geq HS_i(g)\geq K'.ut$.
			
			Then, when client $c$ issues $GET(k')$ to server $i'$, we have proved that  $LD_{i'}(k')\geq K'.ut$, $HS_c[j]\geq K'.ut$ stored at the client $c$ for each server $j\in S_c$. According to line $2$ of Algorithm \ref{alg:client}, the dependency clock value that client $c$ passes to server $i'$ is $rd=\min_{j\in S_c, j\neq i'} HS_c[j]\geq K'.ut$. 
			Recall that we already proved $LD_{i'}(k')\geq K'.ut$.
			Then $GST=\min\left( LD_{i'}(k), \max(RD_{i'}(g), rd)\right)\geq \min( LD_{i'}(k), rd)\geq K'.ut$, and hence $K'$ is visible to client $c$ from server $i'$.

			\item Now consider the case where none of $K_x$ is issued at $i'$.
			Then there exists a simple path $(i',v_m,\cdots, v_1, i)$ such that the causal dependencies are propagated through the path.
			Notice that the situation is identical to the second part of Case II.1 above, and the same proof will show that $K'$ is received by server $i'$, and $K'$ is visible to client $c$ from server $i'$. 
			
		\end{enumerate}

	\end{proof}
	
	\section{Proof for Theorem \ref{thm:1}}\label{sec:thmproof}
	\begin{proof}
		\textbf{To prove the first condition}, which is:
		Let $k$ and $k'$ be any two keys in the store. Let $K$ be a version of key $k$, and $K'$ be a version of key $k'$ such that $K \dep K'$. 
		For any client $c$ that can access both $k$ and $k'$, when $K$ is read by client $c$, $K'$ is visible to $c$.
		
		If $K'$ is due to a local PUT at the server that client $c$ is accessing, then by line $5$ of Algorithm \ref{alg:server}, $K'$ is visible to client $c$.
		Otherwise, if $K'$ is due to a non-local PUT, according to Lemma \ref{lem:2}, $K'$ is received by the server which the client is accessing, and is also visible to the client.
		
		\textbf{To prove the second condition}, which is:
		A version $K$ of a key $k$ is visible to a client $c$ after $c$ completes PUT($k,K$) operation.
		
		Consider a client $c$ issuing GET($k$) after a PUT($k,K$) operation. If client $c$ reads from the same server, according to line $5$ of Algorithm \ref{alg:server}, $K$ is visible to the client. If client $c$ reads from a different server, to pass lines $3,4$ of Algorithm \ref{alg:server}, we have $K.ut\leq PT_c = t\leq GST$. By definition, $GST=\min(LD_i(k), \max(RD_i(g), rd))\leq LD_i(k)$. Thus $K.ut\leq LD_i(k)$, and the definition of $LD_i(k)$ implies that $K$ is already received by $i$. Then, since $K.ut\leq GST$, version $K$ is visible to client $c$.
	\end{proof}
	
	\section{Proof for Theorem \ref{thm:2}}\label{sec:optproof}
	\begin{proof}
		Recall the definition of $GST$ from Section \ref{sec:compute_gst}.

		\[
		GST= \min\left( LD_i(k), \max(RD_i(g), rd)\right)
		\]
		where
		$LD_i(k)=\min_{(v,i)\in L_i(k)}(HB_{vi})$,
		$RD_i(g)=\min_{(x,y)\in R_i(g)}(HB_{xy})$,
		and 
		$rd=\min_{j\in S_c, j\neq i} (HS_c[j])$.
		
		By line $6$ of Algorithm \ref{alg:client} and line $6$ of Algorithm \ref{alg:server}, the value of $HS_c[j]$ the client keeps is the largest $HS_j(g)$ value it has seen so far from servers it accessed so far for $\forall j\in S_c$. By definition, 
		$HS_j(g)=\min_{\forall (z,j)\in R_i(g)}(HB_{zj})$, which implies that $rd$ is also computed as the minimum value of a set of heartbeat values.
		
		By the definitions above, we observe that our $GST$ is computed as the minimum of a set of heartbeat values from server $x$ to server $y$ where  $(x,y)\in L_i(k)\cup R_i(g)$.
		Let $HB_{pq}$ be the minimum heartbeat value from the set and therefore $GST=HB_{pq}$. 
		There are two cases.
		
		\textbf{Case I:} $(p,q)\in L_i(k)$, and thus $q=i$.

		By the definition of $L_i(k)$, there exists a simple cycle $(i,v_1,\cdots,v_m, i)$ of length $\geq2$ in $G^a$ such that $m\geq 1$, $k\in (v_1,i)$, we have $(v_1, i)\in L_i(k)$, and $(v_m,i)\in L_i(k)$ if $(v_m,i)$ is a real edge. First observe that due to the fact that version $K$ with $K.ut>GST=HB_{pi}$ is returned to the client, we have $p\neq v_1$, otherwise version $K$ is not received by server $i$ yet since the latest heartbeat value received by $i$ from $v_1$ is $HB_{pi}<K.ut$. Without loss of generality, let $p=v_m$. 
		We can show the following possible execution that will violate causal consistency. Let there be a $PUT(k', K')$ at server $p$ which results in  a version $K'$ with timestamp $H_{pi}<K'.ut < K.ut$ such that $K\dep K'$. 
		The causal dependency can be created by the same procedure as described in Case I of the proof for Lemma \ref{lem:2}. 
		For completeness, we state the procedure here again.
		First, a client $c_{m+1}$ issues PUT($k',K'$) at server $j$, which leads to an update $u_{K'}$ from $j$ to $i$. 
		Then for $x=m, m-1, \cdots, 2$ sequentially, a client $c_x$ reads the version written by the previous client $c_{x+1}$ from server $v_x$ via a  GET operation at server $v_x$. If $(v_{x-1}, v_x)\in E_1(G^a)$, client $c_x$ then issues PUT($k_x, K_x$) at $v_x$ where $k_x\in \K_{v_{x-1}v_x}$, which leads to a replication update from $v_x$ to $v_{x-1}$. If $(v_{x-1}, v_x)\in E_2(G^a)$, without loss of generality, suppose $c_x$ can access both $v_{x-1},v_x$. Then $c_x$ issues PUT($k_x, K_x$) at $v_{x-1}$ where $k_x\in \K_{v_{x-1}}$. In the end, client $c_1$ reads the version $K_2$,  written by client $c_2$, from server $v_1$, and issues PUT($k,K$) at server $v_1$, which results in an update $u_K$ from $v_1$ to $i$.
		By the definition of happens-before relation, it is clear that PUT($k',K'$) $\rightarrow$ PUT($k,K$), namely $K\dep K'$.

		\begin{figure}[H]
			\centering
			\includegraphics[width=0.7\linewidth]{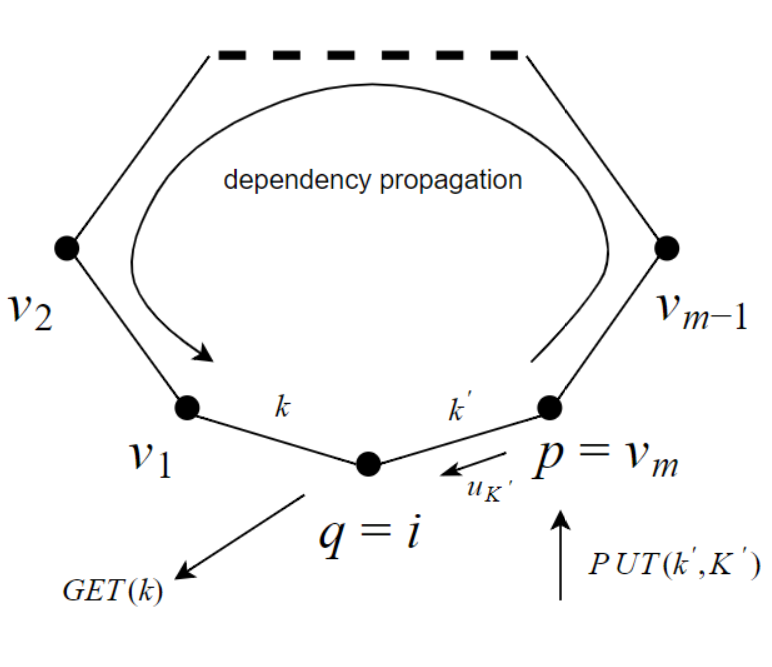}
			\caption{Illustration for Case I}
		\end{figure}
		
		Since $K'.ut >H_{pi}$, $K'$ is not received by $i$ at the time when version $K$ is returned to the client $c$. Now let $u_{K'}$ be delayed indefinitely, which is possible since the system is asynchronous. Consider the case that after reading version $K$, client $c$ issues $GET(k')$ at server $i$. Suppose that client $c$ does not issue any $PUT$ operation before, and thus its $PT_c=0$. Notice that the get operation is non-blocking when $PT_c=0$ by lines $3,4$ of Algorithm \ref{alg:server}, it is possible that an older version $K_0'$ of key $k'$ such that $K'\dep K_0'$ is returned to client $c$ since $u_{K'}$ is delayed and not received by server $i$.
		Hence $K'$ is not visible to the client $c$, which violates the causal consistency. 
		
		\textbf{Case II:} $(p,q)\in R_i(g)$. Then by definition, there exists a simple cycle $(i=v_1,\cdots,v_{m-1}=p,v_m=q)$ of length $\geq2$ in $G^a$ such that $m\geq 2$ and $i,q\in g$.
		
		\begin{figure}[h]
			\centering
			\includegraphics[width=0.7\linewidth]{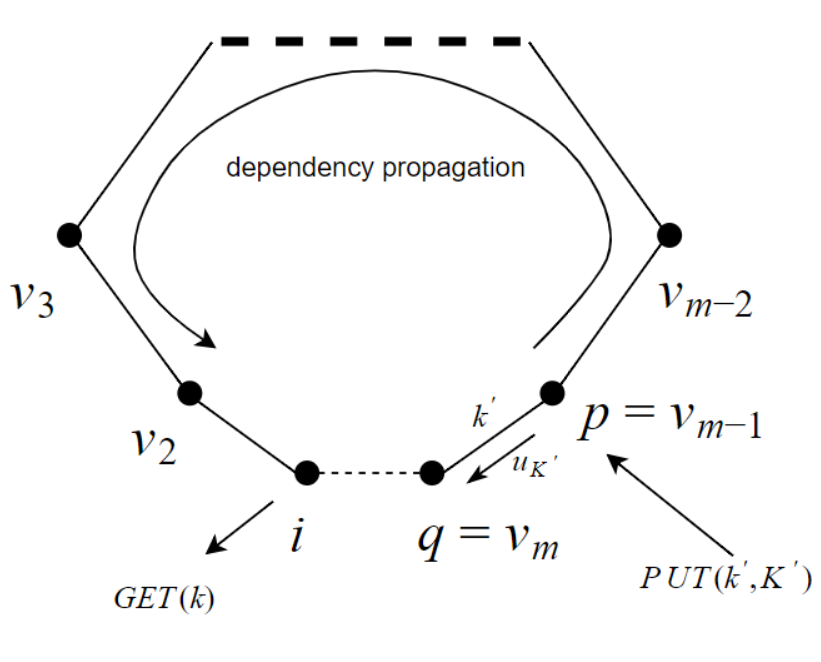}
			\caption{Illustration for Case II}
		\end{figure}
		
		Let $k'\in \mathcal{K}_{pq}$. Let there be a $PUT(k', K')$ at server $p$ which results in  a version $K'$ with timestamp $H_{pq}<K'.ut < K.ut$ such that $K\dep K'$. The causal dependency can be created by a similar procedure as described in Case I above, with differences at the end: client $c_1$ reads the version $K_2$ from server $v_2$, and issues PUT($k_1,K_1$) at server $v_2$ where $k_1\in \mathcal{K}_{iv_2}$. Then some client $c'$ that only access server $i$ ($S_{c'}=\{i\}$) reads the version $K_1$ and issues PUT($k,K$) at server $i$.
		The fact that $S_{c'}=\{i\}$ ensure that when client $c'$ can read $K_1$ without $K'$ being received by $q$.
		
		Now let  $u_{K'}$ be delayed indefinitely. Suppose that after client $c$ gets version $K$, it issues $GET(k')$ at server $i'$. Similar to Case I, $K'$ is not visible to client $c$, which violates the causal consistency.

	\end{proof}
	
	\section{More Simulation Results}\label{sec:experiment}
	\paragraph*{\bf Update Throughput}
	Since we simulate servers by running multiple server processes in a single machine, there is a limitation on the maximum update throughput, which is about $12.5k$ updates per second for each server program when we have $10$ processes running. There also exists a threshold after which the machine cannot handle the update messages in time, leading to a dramatic increase in the visibility latencies. To find such threshold, we plot the latency changes with respect to the update throughput in Figure \ref{fig:ops} and \ref{fig:ops_delay} with $0ms$ and $100ms$ network delays respectively.
	
	\begin{figure}[htp]
		\begin{subfigure}[b]{0.49\columnwidth}
			\includegraphics[width=\linewidth]{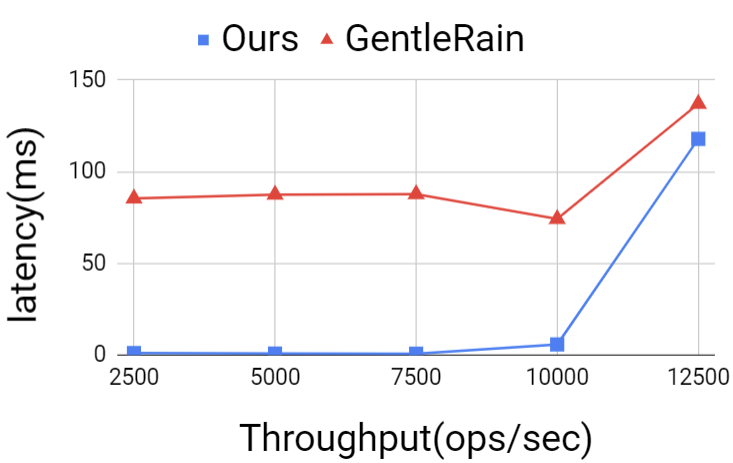}
			\caption{Network Delay = 0ms}
			\label{fig:ops}
		\end{subfigure}
		\hfill 
		\begin{subfigure}[b]{0.49\columnwidth}
			\includegraphics[width=\linewidth]{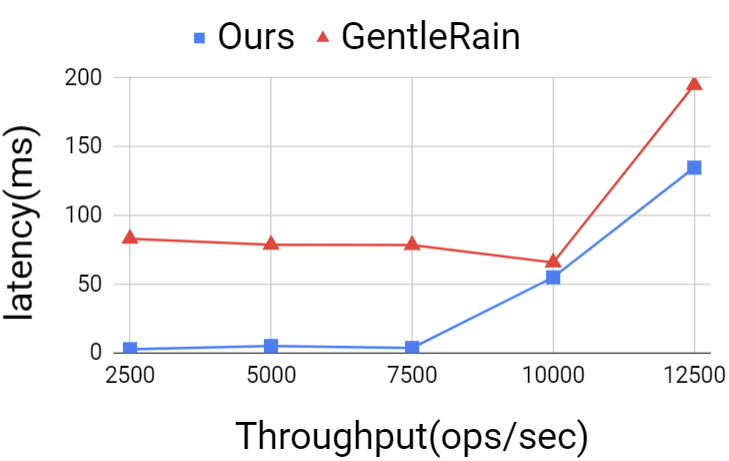}
			\caption{Network Delay = 100ms}
			\label{fig:ops_delay}
		\end{subfigure}
		\caption{Different Update Throughput}
		\label{fig:op}
	\end{figure}

	As we can see from Figure \ref{fig:ops} and \ref{fig:ops_delay}, the threshold would be some value $>10k$ when network delay is $0ms$ and  $>7.5k$ when network delay is $100ms$.  Hence for other evaluations, we set the update throughput to be $5k/\sec$ for each node, since we will increase the other parameters such as ring size, heartbeat frequency, and stabilization frequency for other experiments.

	\paragraph*{\bf Ring Sizes}
	Intuitively, the ring size will affect the visibility latency of the stabilization algorithm in GentleRain, since the number of heartbeat values received by any node will grow linearly with the ring size, leading to smaller $GST$ and larger visibility latencies. However, our algorithm will not be affected too much since the number of heartbeat values received is equal to the number of neighbors in the ring.
	Figure \ref{fig:size_0} and \ref{fig:size_delay} below validate the discussion above, and demonstrate the scalability of our algorithm. In both cases, the visibility latency in our algorithm remains relatively stable while the latency in GentleRain increases as ring size increments. Notice that with network delay of $100ms$, the visibility latency grows dramatically larger (more than $1000ms$) as ring size increases.
	The reason may be that the queue size of messages becomes too large with artificial delay when the ring size is large, which results in high latency in our simulation.
	
	\begin{figure}[htp]
		\begin{subfigure}[b]{0.49\columnwidth}
			\includegraphics[width=\linewidth]{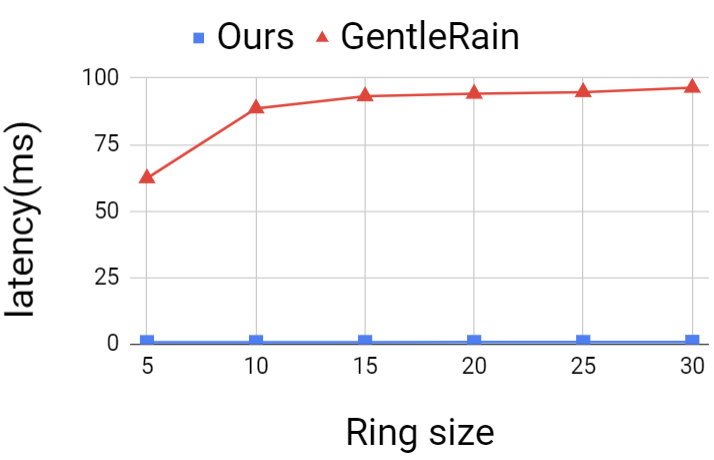}
			\caption{Network Delay = 0ms}
			\label{fig:size_0}
		\end{subfigure}
		\hfill 
		\begin{subfigure}[b]{0.49\columnwidth}
			\includegraphics[width=\linewidth]{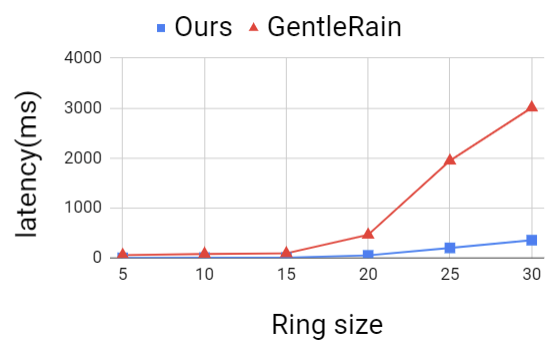}
			\caption{Network Delay = 100ms}
			\label{fig:size_delay}
		\end{subfigure}
		\caption{Different Ring Size}
		\label{fig:size}
	\end{figure}

	{\bf Network Latencies}
	
	To measure the influence of network latencies on the visibility latency, we manually add extra delays to all network packages via Linux {\em tc} command. Although the network delays are set to be constants in our experiment which may not be true in practice, the results give us some insights on how network delay will affect the visibility latencies. As shown in Figure \ref{fig:delay}, the visibility latency is mostly stable with low network delays ($<150ms$), and increases when network delay becomes large ($>150ms$). By definition, visibility latency is the period from when a remote update is received to when the remote update can be returned. Hence in theory, with good network conditions, the visibility latency should not be affected much by network delays. However, when network conditions become worse, 
	the computation of $GST$ may be negatively affected by the network delays, leading to increment in the visibility latencies.

	\begin{figure}[H]
	\centering
	\includegraphics[height=1.25in]{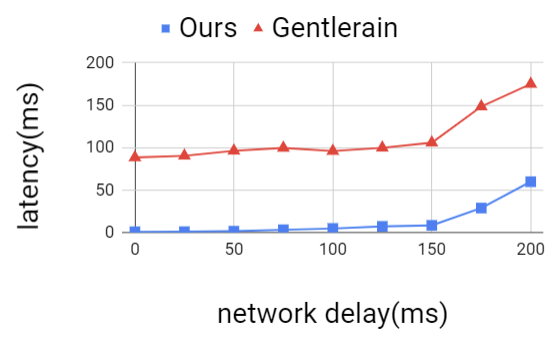}
	\caption{Different Network Delays}
	\label{fig:delay}
	\end{figure}

\end{document}